\RequirePackage{booktabs}

\documentclass[sn-mathphys-num]{sn-jnl_modified}

\usepackage{amsmath,amsfonts}
\usepackage{amssymb,amsthm}
\usepackage{makecell}
\usepackage{geometry}

\usepackage{mathtools}%
\usepackage{algorithm}%
\usepackage[noend]{algpseudocode}%
\usepackage{adjustbox}%
\usepackage{xcolor}%
\usepackage{graphicx}%
\usepackage{multirow, multicol}%
\usepackage{mathrsfs}%
\usepackage{xcolor}%
\usepackage{booktabs}%
\usepackage{algorithm}%
\usepackage{listings}%
\usepackage{cancel}%
\usepackage{makecell}%
\usepackage{comment}%
\usepackage{hyperref}%
\usepackage{multirow}
\usepackage{framed}

\newcommand{\NN}{\ensuremath{\mathbb{N}}}

\newcommand{\FF}{\ensuremath{\mathbb{F}}}
\newcommand{\cC}{\ensuremath{\mathcal{C}}}

\newcommand{\cL}{\ensuremath{\mathscr{L}}}

\DeclareFontFamily{U}{mathx}{}
\DeclareFontShape{U}{mathx}{m}{n}{<-> mathx10}{}
\DeclareSymbolFont{mathx}{U}{mathx}{m}{n}
\DeclareMathAccent{\widecheck}{0}{mathx}{"71}

\def\letus{%
	\mathord{\setbox0=\hbox{$\exists$}%
		\hbox{\kern 0.125\wd0%
			\vbox to \ht0{%
				\hrule width 0.75\wd0%
				\vfill%
				\hrule width 0.75\wd0}%
			\vrule height \ht0%
			\kern 0.125\wd0}%
	}%
}

\newtheorem{theorem}{Theorem}[subsection]
\newtheorem{proposition}[theorem]{Proposition}%
\newtheorem{lemma}[theorem]{Lemma}

\newtheorem{remark}{Remark} 

\newtheorem{definition}[theorem]{Definition}

\makeatletter %
\def\@acknow{}%
\long\def\EarlyAcknow#1 \par{%
\def\@acknow{\abstractfont\abstracthead*{Acknowledgments}%
#1\par}}%

\def\printabstract{\ifx\@acknow\empty\else\@acknow\fi\par%
    \ifx\@abstract\empty\else\@abstract\fi\par}
\makeatother

\begin{document}

\title{A Polynomial-Time Attack on the McEliece Cryptosystem on Elliptic Codes with Arbitrary Divisors}

\author[1]{\fnm{Artyom} \sur{Kuninets}}\email{artkuninets@yandex.ru}

 \author[1]{\fnm{Ekaterina} \sur{Malygina}}\email{emalygina@qapp.tech}

 \author[1]{\fnm{Evgeniy} \sur{Melnichuk}}\email{emelnichuk@qapp.tech}

\affil[1]{\orgname{QApp}, \orgaddress{\street{Bolshoy~blvd~30b1}, \city{Moscow}, \postcode{121205}, \country{Russian Federation}}}

\EarlyAcknow{The authors of this work, as well as Yu.\,D.~Shkuratov, would like to express their gratitude to the participants of the Summer School--Conference ``Cryptography and Information Security''---Ya.\,P.~Vysotsky, R.\,V.~Mashchenkov, P.\,S.~Morozova, and S.\,A.~Orlova---for identifying the limitations in the work \cite{Sh25}.}

\abstract{
	The McEliece cryptosystem based on algebraic geometry codes has been proposed as a way to reduce the key size of code-based cryptography, but several structural attacks have demonstrated the vulnerability of particular families of algebraic geometry codes. Despite this, until recently, there remained schemes and parameter sets that were not vulnerable to any known attack. We propose a new structural attack with ``hints'' that applies to elliptic codes with arbitrary effective divisors. In particular, we prove that, given the elliptic curve, the public generator matrix, and three points from the evaluation divisor, the entire divisor can be recovered in polynomial time, independently of the number of errors used in the cryptosystem. The attack requires $\mathcal{O}(k^2n^2+|\mathcal{E}(\FF_q)|+n)$ operations in $\FF_q$ and succeeds with overwhelming probability, after which the second divisor is recovered in $\mathcal{O}\!\left(k^2n^2 + (|\mathcal{E}(\FF_q)|-n)n^2\right)$ operations. We further propose an optimized version of the attack that requires no additional information at all. Exploiting the action of the automorphisms of the curve, the three known points are replaced by the enumeration of a single pair of field elements, which yields an equivalent key on the given public curve in $\mathcal{O}\!\left(k^2n^2 + q^2 + (|\mathcal{E}(\FF_q)|-n)n^2\right)$ operations on average.
}

\maketitle

\section{Introduction}

The McEliece cryptosystem was proposed in 1978 \cite{McEliece78} and is one of the most extensively studied post-quantum cryptographic schemes. It possesses several advantages, including high encryption and decryption speeds, however, the large key size required to achieve a given security level hinders the integration of code-based cryptosystems into practical protocols. To address this issue, numerous attempts have been made over the past several decades to incorporate various families of codes into the McEliece cryptosystem. The scheme based on algebraic geometry (AG) codes was first proposed in \cite{JM96}, and subsequent constructions achieved better parameters and significantly reduced key sizes compared to Goppa codes.

Subsequent cryptanalysis demonstrated that schemes based on most AG codes are not secure. In \cite{SS92}, a deterministic polynomial-time algorithm was presented for recovering the structure of Reed--Solomon codes (AG codes associated with curves of genus $g = 0$). The works \cite{Min07,FM08} claimed that the McEliece cryptosystem based on elliptic and hyperelliptic codes ($g \leq 2$) is insecure, however, these attacks require finding minimum weight codewords, which is considered a computationally hard problem \cite{Che08} unless the code is MDS. In \cite{MMP14}, it was shown that an equivalent representation of an AG code can be obtained using only the generator matrix of the code. Finally, in 2017, an attack \cite{CMP17} applicable to codes of arbitrary genus was proposed, resulting in a polynomial-time decoding algorithm without knowledge of the secret parameters.

All of the previously described attacks have limitations, and their applicability can be prevented during key generation by imposing additional conditions on the codes and parameters, as was done, for example, by the authors of the ECC$^2$ cryptosystem \cite{ECC2} based on elliptic codes, which until recently was considered secure \cite{Sh25}.

\textbf{Our contribution.} In this work, we propose a structural attack with ``hints'' on the McEliece cryptosystem based on elliptic codes, which, unlike the previously known attacks, applies to arbitrary effective divisors $G$ and to an arbitrary number of inserted errors $t$. We prove that, given the elliptic curve $\mathcal{E}$, the public generator matrix $\mathbf{G}_{\operatorname{pub}}$ and three points of the evaluation divisor $D$, the divisor $D$ is recovered in $\mathcal{O}\!\left(k^2n^2 + |\mathcal{E}(\FF_q)| + n\right)$ operations in $\FF_q$ with probability $1 - \mathcal{O}\!\left(|\mathcal{E}(\FF_q)|^{-2}\right)$, after which the divisor $G$ is recovered in $\mathcal{O}\!\left(k^2n^2 + (|\mathcal{E}(\FF_q)|-n)n^2\right)$ operations. We then propose a modification of the attack that requires no knowledge of the points of $D$. Since the automorphisms of the curve act on the set of secret keys without changing the code, the three ``hints'' can be replaced by the enumeration of a single pair of field elements. This modification recovers an equivalent key on the given public curve in $\mathcal{O}\!\left(k^2n^2 + q^2 + (|\mathcal{E}(\FF_q)|-n)n^2\right)$ operations on average.  The obtained results show that the McEliece cryptosystem based on elliptic codes, including the parameter sets proposed for the ECC$^2$ scheme \cite{ECC2}, is insecure for arbitrary effective divisors and for any number of inserted errors.

\textbf{Organization of the paper.} Section~\ref{sec:background} introduces the preliminary information on divisors, Riemann--Roch spaces and algebraic geometry codes necessary for the rest of the paper. Section~\ref{sec:R-R_spaces} is devoted to the cryptanalysis of the McEliece cryptosystem based on elliptic codes. In Section~\ref{sec:filtration} we recall the computation of the $P$-filtration of an algebraic geometry code together with the decoding attack of \cite{CMP17}, and in Section~\ref{sec:sh25} we revisit the attack of \cite{Sh25} on the ECC$^2$ scheme. Next, in Section~\ref{sec:our_attack} we present our main results, namely the attack recovering the divisor $D$ from three known points and the subsequent recovery of the divisor $G$ for arbitrary effective divisors. Section~\ref{sec:attack_no_hints} focuses on the optimized version of the attack, which requires no knowledge of the points of $D$ and returns an equivalent key on the given public curve.

\section{Preliminaries}\label{sec:background}

\subsection{Algebraic geometry codes}

Let $\mathcal{X}$ be a smooth, affine, and absolutely irreducible algebraic curve of genus $g$ defined over the finite field $\mathbb{F}_{q}$. A \emph{divisor} on $\mathcal{X}$ (defined over $\mathbb{F}_{q}$) is a finite formal sum of places of $\mathcal{X}$, that is,
\[
G = \sum_{P} v_P(G)P,
\]
where $v_P(G) \in \mathbb{Z}$ and all but finitely many coefficients are zero. The set of all such divisors forms an abelian group, denoted by $\operatorname{Div}(\mathcal{X})$. We denote by $\mathcal{X}(\mathbb{F}_{q})$ the set of $\mathbb{F}_{q}$-rational points of $\mathcal{X}$.

For a divisor $G = \sum_P v_P(G)P$, its \emph{support} is defined as
\[
\operatorname{Supp}(G) = \{P \mid v_P(G) \neq 0\},
\]
and its \emph{degree} is given by
\[
\deg(G) = \sum_P v_P(G)\deg(P).
\]

A divisor $G$ is said to be \emph{effective}, written $G \geq 0$, if $v_P(G) \geq 0$ for every place $P$. This induces a partial order on $\operatorname{Div}(\mathcal{X})$ by declaring $G_1 \geq G_2$ whenever $G_1 - G_2$ is effective, that is, whenever $v_P(G_1) \geq v_P(G_2)$ for all $P$.

The \textit{function field} of the curve $\mathcal{X}$ is denoted by
\[
\mathbb{F}_{q}(\mathcal{X})=\left\{ \frac{g(x,y)}{h(x,y)} \mid g,h \in \mathbb{F}_{q}[x, y], h \neq 0 \right\}/\sim,
\]
where $\frac{g}{h} \sim \frac{g^\prime}{h^\prime} \Leftrightarrow (gh^\prime - g^\prime h)(P) = 0$ for all $P \in \mathcal{X}$. 

For a nonzero rational function $f \in \mathbb{F}_{q}(\mathcal{X})$, its \emph{principal divisor} is defined by
\[
(f) = (f)_0 - (f)_\infty,
\]
where $(f)_0$ and $(f)_\infty$ denote the zero and pole divisors of $f$, respectively.

Given a divisor $G \in \operatorname{Div}(\mathcal{X})$, the associated \emph{Riemann–Roch space} is defined as
\[
\mathscr{L}(G)
=
\{f \in \mathbb{F}_{q}(\mathcal{X}) \mid (f) + G \geq 0 \}
\cup \{0\}.
\]
This is a finite-dimensional vector space over $\mathbb{F}_{q}$, and its dimension is denoted by $\ell(G)$.

Let $D = P_1 + \cdots + P_n$ be a divisor consisting of $n$ distinct $\mathbb{F}_{q}$-rational points of $\mathcal{X}$, and assume that $\operatorname{Supp}(D) \cap \operatorname{Supp}(G) = \varnothing$.

We define the evaluation map
\[
\operatorname{ev}_D :
\begin{cases}
\mathscr{L}(G) \longrightarrow \mathbb{F}_{q}^n, \\
f \longmapsto \bigl(f(P_1), \dots, f(P_n)\bigr).
\end{cases}
\]

\begin{definition}[Algebraic Geometry Code]
The \emph{algebraic geometry code} associated with the pair $(D,G)$ is defined as
\[
\mathcal{C}_{\mathscr{L}}(D,G)
=
\{ \operatorname{ev}_D(f) \mid f \in \mathscr{L}(G) \}
\subseteq \mathbb{F}_{q}^n.
\]
\end{definition}

The code $\mathcal{C}_{\mathscr{L}}(D,G)$ is a linear code with parameters $[n,k,d]$, where:
\begin{itemize}
    \item $n = \deg(D)$ is the length of the code,
    \item $k = \dim_{\mathbb{F}_{q}} \mathscr{L}(G)$ is its dimension,
    \item $d$ is its minimum Hamming distance.
\end{itemize}

By the Riemann–Roch theorem (see, e.g., \cite[Theorem 2.2.2]{Stichtenoth09}), the parameters satisfy:
\[
k \geq \deg(G) + 1 - g,
\qquad
d \geq n - \deg(G).
\]
Moreover, if $2g - 2 < \deg(G) < n$, then $k = \deg(G) + 1 - g$.

If $\{f_1, \ldots, f_k\}$ is a basis of $\mathscr{L}(G)$, then a generator matrix of $\mathcal{C}_{\mathscr{L}}(D,G)$ is given by
\[
\begin{pmatrix}
f_1(P_1) & f_1(P_2) & \cdots & f_1(P_n) \\
f_2(P_1) & f_2(P_2) & \cdots & f_2(P_n) \\
\vdots & \vdots & \ddots & \vdots \\
f_k(P_1) & f_k(P_2) & \cdots & f_k(P_n)
\end{pmatrix}.
\]

For more details on function fields and algebraic geometry codes we refer to the book~\cite{Stichtenoth09}.

\section{Structural attacks on the McEliece cryptosystem based on elliptic codes} \label{sec:R-R_spaces}
In this work, we consider algebraic geometry codes associated with elliptic curves. An elliptic curve $\mathcal{E}$ over the finite field $\FF_q$ is defined in affine coordinates by the smooth equation
\[
\mathcal{E}: y^2+a_1xy+a_3y=x^3+a_2x^2+a_4x+a_6,
\]
where $a_i\in\FF_q$. This equation is called the \emph{full Weierstrass form}. The curve has genus $g=1$, and the set of $\FF_q$-rational points is
\[
\mathcal{E}(\FF_q)=\{(\alpha,\beta)\in\FF_q^2:\beta^2+a_1\alpha\beta+a_3\beta=\alpha^3+a_2\alpha^2+a_4\alpha+a_6\}\cup\{P_\infty\},
\]
where $P_\infty$ is the unique point at infinity. The number of rational points satisfies the Hasse bound
\[
q+1-\lfloor2\sqrt q\rfloor\le |\mathcal{E}(\FF_q)|\le q+1+\lfloor2\sqrt q\rfloor.
\]

Since the genus of $\mathcal{E}$ equals $1$, for $0<k<n$ the Riemann--Roch theorem gives $\dim\mathscr{L}(G)=\deg(G) = k$. Hence, $\mathcal{C}_{\mathscr{L}}(D,G)$ has parameters $[n,k,d]$, where $d\ge n-k$.

Using the results of \cite{KM26} (see Appendix~\ref{sec:appendix_1}), we describe in Algorithm~\ref{alg:McEliece_keygen} the key generation procedure for the McEliece cryptosystem based on elliptic codes, which will be considered in Section~\ref{sec:our_attack}.

\begin{algorithm}[H]
\caption{Key generation for the McEliece cryptosystem based on elliptic codes}
\label{alg:McEliece_keygen}
\begin{algorithmic}[1]
\Require Finite field $\mathbb{F}_q$; elliptic curve $\mathcal{E}/\FF_q: y^2 + a_1xy + a_3y = x^3 + a_2x^2 + a_4x + a_6$; code parameters $n,k,t \in \mathbb{N}$, where $t$ is the number of errors.
\State $D \gets \sum_{i=1}^n P_i$, where $P_i \neq P_j$ for $i\neq j$ and $P_i \in \mathcal{E}(\FF_q)$.
\State $G \gets \sum_i k_iQ_i$, where $Q_i \not\in \operatorname{supp}(D)$ for all $i$ and $k_i \geq 0$. \Comment{$\sum k_i = \deg(G) = k$}
\State $\mathscr{L}_b(\mathcal{E}) \gets
\begin{cases}
\{ x^iy^j \mid 2i+3j \leq k,\; j = \overline{0,1} \}, & \text{if } G = kP_\infty,\\
\{1, f_{1,s} \mid 2 \leq s \leq k\}, & \text{if } G = kQ,\\
\{1, f_{i,s}, g_j \mid 1 \leq i \leq z,\ 1 \leq j \leq z-1,\ 2 \leq s \leq k_i\}, & \text{if } G = \sum_{i=1}^z k_iQ_i.
\end{cases}$
\State $\mathbf{G}_{\mathsf{gen}} \gets
\begin{pmatrix}
\widehat{f_1}(P_1) & \widehat{f_1}(P_2) & \cdots & \widehat{f_1}(P_n) \\
\widehat{f_2}(P_1) & \widehat{f_2}(P_2) & \cdots & \widehat{f_2}(P_n) \\
\vdots & \vdots & \ddots & \vdots \\
\widehat{f_k}(P_1) & \widehat{f_k}(P_2) & \cdots & \widehat{f_k}(P_n)
\end{pmatrix}$, where $\widehat{f_i} \in \mathscr{L}_b(\mathcal{E})$.
\State Transform the matrix $\mathbf{G}_{\mathsf{gen}}$ into systematic form $\left(\mathbf{I}_k \mid \mathbf{G}'\right)$.
\If{the transformation is impossible}
\State Return to Step~1. \Comment{or apply a permutation that transforms the matrix into systematic form}
\EndIf
\State $\mathbf{G}_{\operatorname{pub}} \gets \mathbf{G}'$.
\Ensure Public key: $(\mathcal{E}, \mathbf{G}_{\operatorname{pub}}, t)$. Private key: $(D, G)$.
\end{algorithmic}
\end{algorithm}

In the considered version of the cryptosystem, the elliptic curve is part of the public key. During the public key generation process, the generator matrix is transformed into systematic form, which is equivalent to left multiplication by an invertible matrix $S$. The random choice of points in the support of the divisor $D$ is equivalent to right multiplication by a permutation matrix $P$. Therefore, the public key generation procedure is equivalent to that of the classical McEliece cryptosystem~\cite{McEliece78}.

\subsection{Computation of P-filtration for algebraic geometry code from \cite{CMP17}}\label{sec:filtration}

As before, consider the elliptic code $\cC_\cL(D, G)$ associated with divisors $D, G \in \operatorname{Div}(\mathcal{E})$ such that $\text{supp}(G)\cap\text{supp}(D) = \varnothing$. Let $P$ be one of the points of the divisor $D$. Denote by $D'$ the divisor obtained by removing the point $P$ from $D$, i.e. $D' = D - P$, for which $|\operatorname{supp}(D)|= (n - 1)$. 

In the work \cite{CMP17} the authors demonstrated the possibility of constructing the following chains of codes using only the generator matrix of the original code $\cC_\cL(D, G)$:
\[
\mathcal{V}_i = \cC_\cL(D^\prime, G + iP), \quad \quad \mathcal{U}_i = \cC_\cL(D^\prime, iP), \quad \forall i \in \mathbb{Z}.
\] 

The following propositions explain how to compute the elements of the sequences $\mathcal{V}_i$ and $\mathcal{U}_i$ without knowledge of the structure of the AG-code, i.e. without knowledge of the triple $\left(\mathcal{E}/\FF_q, D, G \right)$. 

\begin{proposition}[{\cite[Proposition 20]{CMP17}}]\label{prop:V_i}
Let $ i \geq 1 $, then the elements of the sequence $\{ \mathcal{V}_i \}$ can be computed as follows:  
{\normalsize
\begin{equation*}\label{eq:V_i}
    \mathcal{V}_{-i-1} = \left\{ \mathbf{z} \in \mathcal{V}_{-i} \mid \mathbf{z} \star \mathcal{V}_{-i+1} \subseteq \mathcal{V}^{(2)}_{-i} \right\},  \text{where } \deg(G) - \frac{n-4}{2} \leq i \leq \deg(G) - 2g + 1,
\end{equation*}
}
{{\normalsize
\begin{equation*}\label{eq:V_i_dual}
    \mathcal{V}_{i+1} = \left\{ \mathbf{z} \in \mathbb{F}_{q}^{n-1} \mid \mathbf{z} \star \mathcal{V}_{i-1} \subseteq \mathcal{V}^{(2)}_{i} \right\}, \text{where } 2g + 1 - \deg(G) \leq i \leq \frac{n-4}{2} - \deg(G).
\end{equation*}
}
}
\end{proposition}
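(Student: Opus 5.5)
The plan is to check two inclusions in each recursion, using the multiplicativity of AG codes under the componentwise product $\star$ and the fact that on an elliptic curve ($g=1$) every divisor of degree at least $2g+1 = 3$ is very well behaved. We may take as standard the following result (Mumford; see also \cite{CMP17}): if $\deg A \geq 2g$ and $\deg B \geq 2g+1$, then $\mathscr{L}(A)\mathscr{L}(B) = \mathscr{L}(A+B)$. It follows that
\[
\mathcal{C}_{\mathscr{L}}(D',A)\star\mathcal{C}_{\mathscr{L}}(D',B) = \mathcal{C}_{\mathscr{L}}(D',A+B).
\]
We will also use the following standard fact. If $\deg A < n-1$, then $\operatorname{ev}_{D'}$ is injective on $\mathscr{L}(A)$. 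Hence a vector $\mathbf{z}=\operatorname{ev}_{D'}(h)$, with $h\in\mathscr{L}(A)$, lies in $\mathcal{C}_{\mathscr{L}}(D',A')$ iff $h\in \mathscr{L}(A')$. This holds whenever $\deg A$ and $\deg A'$ are both $< n-1$.

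\textbf{Second recursion.} Set $A_i = G+iP$, so $\mathcal{V}_i=\mathcal{C}_{\mathscr{L}}(D',A_i)$ and $\deg A_i=\deg G+i$.

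The inclusion ``$\subseteq$'' is immediate. If $\mathbf{z}\in\mathcal{V}_{i+1}$, then
\[
\mathbf{z}\star\mathcal{V}_{i-1}\subseteq \mathcal{C}_{\mathscr{L}}(D',2A_i)=\mathcal{V}_i^{(2)}.
\]
The equality $\mathcal{C}_{\mathscr{L}}(D',2A_i)=\mathcal{V}_i^{(2)}$ comes from the product result, since $\deg A_i\ge 2g+1$ by the lower bound on $i$.

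For ``$\supseteq$'', the main step is to show that a vector $\mathbf{z}\in\FF_q^{n-1}$ is actually an evaluation of a function. Choose $f\in\mathscr{L}(A_{i-1})$; since $\deg A_{i-1}\ge 2g$, such $f$ can be chosen nonvanishing at a given point. Then $\mathbf{z}\star\operatorname{ev}(f)=\operatorname{ev}(h_f)$ with $h_f\in\mathscr{L}(2A_i)$. Define $\phi=h_f/f$, so that $\mathbf{z}=\operatorname{ev}(\phi)$ on the points where $f\ne 0$.

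Next we check that $\phi$ does not depend on $f$. For two such functions $f,f'$, the function $h_ff'-h_{f'}f\in\mathscr{L}(2A_i+A_{i-1})$ vanishes on all of $D'$. Its degree is $3\deg G+3i-1<n-1$, which follows from the upper bound $i\le\frac{n-4}{2}-\deg G$ (with room to spare once $\deg G\ge 1$). So this function is identically zero. Hence $\phi$ is a single rational function with $\phi\,\mathscr{L}(A_{i-1})\subseteq\mathscr{L}(2A_i)$.

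We then bound the poles of $\phi$. Because $\mathscr{L}(A_{i-1})$ is base-point free (as $\deg\ge 2g$), we can pick $f$ with no zero at any given place. This gives $(\phi)\ge -(2A_i-A_{i-1})=-A_{i+1}$, so $\phi\in\mathscr{L}(A_{i+1})$ and $\mathbf{z}\in\mathcal{V}_{i+1}$.

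\textbf{First recursion.} This is the dual statement. Here $\mathbf{z}\in\mathcal{V}_{-i}$ is already an evaluation of some $h\in\mathscr{L}(A_{-i})$. The condition $\mathbf{z}\star\mathcal{V}_{-i+1}\subseteq\mathcal{V}_{-i}^{(2)}=\mathcal{C}_{\mathscr{L}}(D',2A_{-i})$ lifts, by injectivity, to
\[
h\,\mathscr{L}(A_{-i+1})\subseteq\mathscr{L}(2A_{-i}).
\]
This lifting needs $\deg(A_{-i}+A_{-i+1})<n-1$, which is $2\deg G-2i+1<n-1$. That inequality is exactly the lower bound $i\geq \deg G-\frac{n-4}{2}$, up to rounding.

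The product formula requires $\deg A_{-i}\ge 2g+1$, i.e. $i\le \deg G-2g-1$. There is a slight index mismatch with the stated bound $\deg G-2g+1$, which I would reconcile with \cite{CMP17}'s conventions. The base-point-freeness argument then gives $(h)\ge -(2A_{-i}-A_{-i+1})=-A_{-i-1}$, so $h\in\mathscr{L}(A_{-i-1})$. Conversely, every $h\in\mathscr{L}(A_{-i-1})$ clearly satisfies the condition.

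\textbf{Main obstacle.} The hardest step is the ``$\supseteq$'' direction of the second recursion: producing a global function $\phi$ from an arbitrary vector $\mathbf{z}$, and controlling the degree bounds that make the injectivity argument valid. The index ranges in the statement are precisely what this bookkeeping requires.
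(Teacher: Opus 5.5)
\textbf{The gap is in the ``$\supseteq$'' direction of the second recursion}, at the step where you show that $\phi=h_f/f$ does not depend on $f$. There you need $h_ff'-h_{f'}f\in\mathscr{L}(2A_i+A_{i-1})$ to be zero because it vanishes on $D'$, i.e.\ $3\deg A_i-1<n-1$. You claim this follows from $i\le\frac{n-4}{2}-\deg G$. It does not. That hypothesis only gives $2\deg A_i\le n-4$, so $3\deg A_i-1$ can be as large as $\frac{3n}{2}-7$, whereas your inequality amounts to $\deg A_i<n/3$. For example, $n=20$ and $\deg A_i=8$ lie in the stated range, but $3\cdot 8-1=23>19$. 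So on a large part of the range you have not shown that $\mathbf{z}$ is the evaluation of a single rational function, and the pole bound that follows has nothing to act on. This includes $\deg G$ near $n/2$, which is how the paper uses the proposition.

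\textbf{How to close it.} The standard route, which produces exactly the stated range, is duality. Since $\langle\mathbf{z}\star\mathbf{b},\mathbf{a}\rangle=\langle\mathbf{z},\mathbf{b}\star\mathbf{a}\rangle$, the set $\{\mathbf{z}:\mathbf{z}\star\mathcal{V}_{i-1}\subseteq\mathcal{C}_{\mathscr{L}}(D',2A_i)\}$ equals $\bigl(\mathcal{V}_{i-1}\star\mathcal{C}_{\mathscr{L}}(D',2A_i)^\perp\bigr)^\perp$. Moreover $\mathcal{C}_{\mathscr{L}}(D',2A_i)^\perp=\mathcal{C}_{\mathscr{L}}(D',D'-2A_i+(\omega))$ for a differential $\omega$ with simple poles and residue $1$ at the points of $D'$. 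Mumford's theorem applies because $\deg A_{i-1}\ge 2g$ (your lower bound) and $\deg(D'-2A_i+(\omega))=n-3-2\deg A_i+2g\ge 2g+1$ (exactly your upper bound). So the product is $\mathcal{C}_{\mathscr{L}}(D',D'-A_{i+1}+(\omega))=\mathcal{V}_{i+1}^\perp$, and you dualize.

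Alternatively, for $g=1$ you can repair your direct argument by never multiplying by all of $\mathscr{L}(A_{i-1})$ at once. Take $E=A_{i-1}-(\deg A_{i-1}-2)P$. Then you only need injectivity on $\mathscr{L}(2A_i+E)$, of degree $\le n-2$, and this gives $\phi\in\mathscr{L}(2A_i-E)$ with $\operatorname{ev}_{D'}(\phi)=\mathbf{z}$. Then add $P$ one at a time. For $f\in\mathscr{L}(E+P)$, both $\phi f$ and $h_f$ lie in $\mathscr{L}(2A_i+P)$, of degree $\le n-3$, so $\phi f=h_f\in\mathscr{L}(2A_i)$. Base-point-freeness of $\mathscr{L}(E+P)$ then yields $\phi\in\mathscr{L}(2A_i-E-P)$.

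\textbf{On the first recursion}, your doubt is justified and should not be ``reconciled''. Take $g=1$ and $i=\deg G-2$, and let $w$ generate $\mathscr{L}(A_{-i-1})$. Then $w\,\mathscr{L}(A_{-i+1})=\mathscr{L}(2A_{-i}-Q_0)$ with $Q_0=(w)+A_{-i-1}$, which is a $3$-dimensional space with a base point. On the other hand, $\mathscr{L}(A_{-i})^2$ is base-point-free of dimension $3$. Since evaluation is injective on $\mathscr{L}(2A_{-i})$, the vector $\operatorname{ev}_{D'}(w)\in\mathcal{V}_{-i-1}$ fails the test. The correct upper bound is therefore $i\le\deg G-2g-1$, and under it your first-recursion argument is complete.
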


\begin{proposition}[{\cite[Proposition 23]{CMP17}}]\label{prop:U_i}
Let $i, j, \ell$ be such that $i + j = \ell$, $\deg(G) + \ell > 0$ and $\deg(G) + \ell < n - 4$. Then
\begin{equation*}
    \mathcal{U}_i = \{\mathbf{z} \in \mathbb{F}_q^{n-1} \mid \mathbf{z} \star \mathcal{V}_j \subseteq \mathcal{V}_\ell\}, \quad \text{where } \deg(G) + j > 2g,
\end{equation*}
\begin{equation*}
    \mathcal{V}_i = \{\mathbf{z} \in \mathbb{F}_q^{n-1} \mid \mathbf{z} \star \mathcal{U}_j \subseteq \mathcal{V}_\ell\}, \quad \text{where } j > 2g,
\end{equation*}
\end{proposition}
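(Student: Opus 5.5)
The plan is to identify both equalities as instances of a single statement about \emph{conductors} of AG codes. For divisors $A,B$ on $\mathcal{E}$ with supports disjoint from $D'$, I will show
\[
\bigl(\cC_\cL(D',A):\cC_\cL(D',B)\bigr):=\{\mathbf{z}\in\FF_q^{n-1}\mid \mathbf{z}\star\cC_\cL(D',B)\subseteq\cC_\cL(D',A)\}=\cC_\cL(D',A-B)
\]
whenever $\deg B\ge 2g+1$ and $\deg A$ is small compared to $n-1$. The two cases are then:
\begin{itemize}
\item the first equality, with $A=G+\ell P$ and $B=G+jP$, so that $A-B=iP$ and the hypothesis needed is $\deg(G)+j>2g$;
\item the second equality, with $A=G+\ell P$ and $B=jP$, so that $A-B=G+iP$ and the hypothesis needed is $j>2g$.
\end{itemize}

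\textbf{Easy inclusion.} If $\varphi\in\cL(A-B)$ and $f\in\cL(B)$, then $(\varphi f)\ge -(A-B)-B=-A$. Hence $\varphi f\in\cL(A)$, and $\operatorname{ev}_{D'}(\varphi)\star\operatorname{ev}_{D'}(f)=\operatorname{ev}_{D'}(\varphi f)\in\cC_\cL(D',A)$. This gives $\cC_\cL(D',A-B)$ inside the conductor.

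\textbf{Reverse inclusion via duality.} I will dualize. By the residue description,
\[
\cC_\cL(D',A)^\perp=\cC_\Omega(D',A)=\mathbf{a}\star\cC_\cL(D',D'-A+K),
\]
where $K=(\omega)$ for a suitable differential $\omega$ with simple poles and residues $a_P$ at the points of $D'$, and $\mathbf{a}$ is the nowhere-zero vector of these residues. The condition $\mathbf{z}\star\cC_\cL(D',B)\subseteq\cC_\cL(D',A)$ is then equivalent to $\mathbf{z}$ being orthogonal to
\[
\mathbf{a}\star\cC_\cL(D',B)\star\cC_\cL(D',D'-A+K).
\]
Set $E=D'-A+K$. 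The key input is Mumford's surjectivity theorem: the multiplication map $\cL(B)\otimes\cL(E)\to\cL(B+E)$ is surjective when $\deg B\ge 2g+1$ and $\deg E\ge 2g$. Surjectivity gives
\[
\cC_\cL(D',B)\star\cC_\cL(D',E)=\cC_\cL(D',B+E)=\cC_\cL\bigl(D',D'-(A-B)+K\bigr).
\]
Therefore $\mathbf{z}\in\bigl(\mathbf{a}\star\cC_\cL(D',D'-(A-B)+K)\bigr)^\perp=\cC_\Omega(D',A-B)^\perp=\cC_\cL(D',A-B)$, using biduality.

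\textbf{Checking the degree hypotheses.} The condition on $B$ is exactly the hypothesis $\deg(G)+j>2g$ in the first case and $j>2g$ in the second. For $E$ we need $\deg E=(n-1)-\deg(G)-\ell+2g-2\ge 2g$, i.e. $\deg(G)+\ell\le n-3$, which follows from $\deg(G)+\ell<n-4$. We must also make sure that no evaluation map degenerates in a way that breaks the equality of codes with spaces. The condition $0<\deg(G)+\ell<n-4$ keeps $\operatorname{ev}_{D'}$ injective on $\cL(A)$, so the dimension counts are consistent, and the support of $K$ can be chosen away from $P$ and $D'$.

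\textbf{Main obstacle.} The hardest step is this multiplication-surjectivity step in the elliptic case ($g=1$): $\cL(B)\cL(E)=\cL(B+E)$ for $\deg B\ge 3$ and $\deg E\ge 2$. I would cite Mumford's theorem, or prove it directly on $\mathcal{E}$ by the base-point-free pencil trick. I would also need to track carefully that $A-B$ may have negative degree (e.g.\ $i<0$). In that case both sides should be computed as codes of $\cL$ of a possibly noneffective divisor, and the duality argument is unaffected.
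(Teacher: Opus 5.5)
Your proof is correct and follows the standard route: the paper gives no proof of its own and imports the statement from \cite{CMP17}, where both identities are instances of the conductor formula $\{\mathbf{z} \mid \mathbf{z}\star\mathcal{C}_\mathscr{L}(D',B)\subseteq\mathcal{C}_\mathscr{L}(D',A)\}=\mathcal{C}_\mathscr{L}(D',A-B)$, obtained as you do via $\bigl(\mathcal{C}_\mathscr{L}(D',B)\star\mathcal{C}_\mathscr{L}(D',A)^\perp\bigr)^\perp$, the residue description of the dual code, and Mumford's surjectivity theorem. One small slip does not affect the argument: $\operatorname{Supp}(K)$ cannot avoid $D'$, since the differential has simple poles there, but what you actually need is that $E=D'-A+K$ avoids $D'$, which holds automatically; likewise the lower bound $\deg(G)+\ell>0$ and your injectivity remark play no role in the proof.
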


The computation of subsequent terms of the sequences is based on a recursive approach, where each new element is derived directly from the two preceding ones. The initial elements are defined as follows:

\begin{itemize}
    \item $\mathcal{V}_0 = \mathcal{C}_\mathscr{L}(D', G)$~--- the \emph{punctured code}, obtained by
    deleting the component corresponding to the point $P_1$, which is equivalent to deleting a column in the generator matrix;
    \item $\mathcal{V}_{-1} = \mathcal{C}_\mathscr{L}(D', G - P)$~--- the \emph{shortened code}, obtained by selecting codewords that are zero on the component $P_1$, 
    and subsequently deleting this component, which is equivalent to deleting a column and a row in the generator matrix.
\end{itemize}

\paragraph{Error correcting pairs and arrays.}

According to \cite{CMP17}, using the above process, we can construct the code $\mathcal{V}_{-t-g}=\cC_\mathscr{L}(D',G-(t+g)P)$. This, in turn, allows to compute a pair $\mathcal{B}=\mathcal{V}_{-t-g}$ and $\mathcal{A}=(\mathcal{B} \star \cC_\mathscr{L}(D, G)^\bot )^\bot$ that corrects $t$ errors for the dual code $\mathcal{C}_\mathscr{L}(D, G)^\perp$. Suppose a vector $\mathbf{y} = \mathbf{c} + \mathbf{e}$ is given, where $\mathbf{c} \in \cC_\mathscr{L}(D, G)^\bot$ and $\mathbf{e}$ is an error vector of weight $\operatorname{wt}(\mathbf{e}) \leq \lfloor\frac{d^*-g-1}{2}\rfloor$, where $d^*$ is the designed distance of the code. After recovering the pair of codes $(\mathcal{A}, \mathcal{B})$, the adversary can decode the vector $\mathbf{y}'=(y_2,\dots,y_n) = \mathbf{c}' + \mathbf{e}'$ in time $\mathcal{O}(mn^3)$, where $\mathbf{c}'$ and $\mathbf{e}'$ are obtained by shortening the original vectors with respect to the first coordinate. In the final step, it is necessary to recover the first coordinate of the codeword $\mathbf{c}$ by solving a system of linear equations. 

Note that using an error‑correcting pair allows the attack to be applied when $t\leq \lfloor\frac{d^*-g-1}{2}\rfloor$. Moreover, in \cite{CMP17} a similar attack was also proposed, but using an error‑correcting array. That approach enables the attack for larger values of $t$, since the array corrects $t\leq \lfloor\frac{d^*-1}{2}\rfloor$ errors.

In \cite{PELP}, the authors propose an approach to construct power error-locating pairs for AG codes based on a generalized definition of error-locating pairs. This approach makes it possible to correct more errors than the unique decoding radius and, in particular, to construct a power error-locating pair that can be used without any additional knowledge of the underlying curve or divisor. Consequently, the corresponding attack can be applied to a McEliece cryptosystem based on an AG code even when the error weight exceeds half the minimum distance. However, as noted by the authors, this approach cannot be used to construct an effective attack when the error weight is close to the Johnson bound, since it requires a Guruswami--Sudan-like decoder with a decoding radius close to the Johnson radius.

\subsection{Shkuratov's attack on ECC$^2$ scheme \cite{Sh25}}\label{sec:sh25}

In \cite{Sh25}, the first polynomial-time attack on the McEliece cryptosystem with known divisor $G = kP_\infty$ is presented. The attack works for an arbitrary number of errors $t$ under the restriction $5 \leq \deg(G) \leq \frac{n}{2} - 1$. Algorithm~\ref{alg:a1}, described in this work, recovers equivalent keys $\widehat{\mathcal{E}}, \widehat{D}$ such that $\mathcal{C}_{\mathscr{L}_{\mathcal{E}}}(D, G) = \mathcal{C}_{\mathscr{L}_{\widehat{\mathcal{E}}}}(\widehat{D}, G)$, where $\widehat{\mathcal{E}}$ is an elliptic curve isomorphic to $\mathcal{E}$, with complexity $\mathcal{O}((k^2+q^2)n^2)$.

\begin{algorithm}[H]
\caption{Recovery of $\widehat{\mathcal{E}}$ and $\widehat{D}$ from \cite{Sh25}}
\label{alg:a1}
\begin{algorithmic}[1]
\Require generator matrix $\mathbf{G}_{pub}$ of the code $\mathcal{C}_\mathscr{L}(D, k P_\infty)$.
\Ensure elliptic curve $\widehat{\mathcal{E}}$ isomorphic to $\mathcal{E}$; divisor $\widehat{D}$ obtained from $D$ via the isomorphism $\mathcal{E} \to \widehat{\mathcal{E}}$.

\State Compute $\mathcal{U}_2$ and $\mathcal{U}_3$ from $\mathcal{V}_{-1}$ and $\mathcal{V}_0$ using Propositions~\ref{prop:V_i} and~\ref{prop:U_i}.
\State Select an arbitrary codeword $\mathbf{v}' \in \mathcal{U}_2$ that does not correspond to a constant, and an arbitrary $\mathbf{v}'' \in \mathcal{U}_3$ that does not belong to the code $\mathcal{U}_2$.
\State Find constants $c_1, c_2 \in \mathbb{F}_q$ such that the vector $\mathbf{v}''' = \left( \frac{v'_i + c_1}{v''_i + c_2} \right)_{i=1}^{n-1}$ belongs to the code $\mathcal{V}_0$.
\State Compute the vector $\mathbf{u} = \mathbf{v}' \star \mathbf{v}''' \star \mathbf{v}'''$.
\State Recover the divisor $\widehat{D}' = \sum_{i = 1}^{n-1} \widehat{P}_i$, where $\widehat{P}_i = (v'''_i, u_i)$ \Comment{Without loss of generality, $D' = D - P_n$.}
\State Recover the equation of the isomorphic curve $\widehat{\mathcal{E}}$ by substituting the points $\widehat{P}_i$ into the curve equation and solving the resulting system of linear equations.
\State Recover the missing point $\widehat{P}_n$ by substituting $x' = 0$ into the curve equation.
\end{algorithmic}
\end{algorithm}

We note that the correctness proof and the complexity analysis presented in the original work \cite{Sh25} contain several inaccuracies. Nevertheless, the overall idea of the algorithm remains correct. In the considered work, the functions corresponding to vectors $\mathbf{v}'$ and $\mathbf{v}''$ are represented as
$$
\hat{g}_1 = a\cdot g_1 + b,
\qquad
\hat{g}_2 = c\cdot g_2 + l \cdot g_1 + e,
$$
\[
g_1 = \frac{y + a_0 + a_1(x-x_0)}{(x-x_0)^2}, \qquad g_2 =\frac{y + a_0 + a_1(x-x_0) + a_2(x-x_0)^2}{(x-x_0)^3},
\]
where $a, b, c, l, e, a_i \in \mathbb{F}_q$, with $a \neq 0$ and $c \neq 0$. In Step~3 of Algorithm~\ref{alg:a1}, one must find constants $c_1, c_2 \in \mathbb{F}_q$ such that the following equality holds:

\begin{equation}\label{eq1}
\hat{g}_2 + c_2 = \frac{\hat{g}_1 + c_1}{c_3(x - x_0)}.
\end{equation}

Expanding this equation yields
$$
g_1\cdot\left[c + l \cdot (x - x_0) - \frac{a}{c_3}\right]
+ 1\cdot\left[c \cdot a_2 + (e + c_2) \cdot (x - x_0) - \frac{b + c_1}{c_3}\right] = 0.
$$

The functions $g_1$ and $1$ form a basis of the Riemann--Roch space corresponding to the code $\mathcal{U}_2$. By the linear independence of the basis elements, the coefficients of both basis vectors must vanish. We therefore obtain the following conditions.

\textbf{Condition 1}:
$$
c + l \cdot (x - x_0) - \frac{a}{c_3} = 0.
$$

Since this condition must hold for all $x$, it follows that $l = 0$. Furthermore,
$$
c = \frac{a}{c_3}
\quad\Longrightarrow\quad
c_3 = \frac{a}{c}.
$$

\textbf{Condition 2}:
$$
c \cdot a_2 + (e + c_2) \cdot (x - x_0) - \frac{b + c_1}{c_3} = 0.
$$

Since this condition must hold for all $x$, we obtain
$$
e + c_2 = 0
\quad\Longrightarrow\quad
c_2 = -e \quad \text{and}\quad c_1 = a_2 \cdot a - b.
$$

In addition to the constraints arising from equation~\eqref{eq1}, there is an additional condition related to the curve isomorphism. The isomorphism $\phi(x, y)$ is defined by

$$
x' = c_3 \cdot (x - x_0),
$$
$$
y' = a \cdot c_3^2 \cdot (y + y_0 + a_1(x - x_0))
+ b \cdot c_3^2 \cdot (x - x_0)^2.
$$

For this mapping to be an isomorphism preserving the standard form of the curve equation, the quadratic term must vanish, which is achieved when $b = 0$. Thus, the resulting constraints $b = 0$ and $l = 0$ make it highly unlikely that relation~\eqref{eq1} together with the isomorphism constraint holds for arbitrary vectors $\mathbf{v}'$ and $\mathbf{v}''$. Indeed, under these conditions, the vector $\mathbf{v}'$ corresponds to a function from the space $\langle g_1\rangle$. The probability of randomly selecting such a vector is, on average, $1/q$. Similarly, the probability of selecting a vector $\mathbf{v}''$ corresponding to a function from the space $\langle g_2 + e\rangle$ is also $1/q$.

Therefore, in order for the vector in Step~3 to satisfy equation~\eqref{eq1} together with the condition $l = 0$, it is additionally necessary to search for a vector $\mathbf{v}'' \in \langle \operatorname{ev}_{D'}(g_2), \operatorname{ev}_{D'}(1) \rangle$, which increases the complexity of the attack by a factor of $\mathcal{O}(q)$. Similarly, finding the vector $\mathbf{u}$ corresponding to the function $a \cdot c_3^2 \cdot (y + y_0 + a_1(x - x_0))$ requires an additional search over at most $q$ candidates. Consequently, the final complexity of the attack becomes $\mathcal{O}(k^2n^2 + q^3n(n-k) + q)$.

\subsection{New structural attack for arbitrary effective divisors $G$}\label{sec:our_attack}

Here we will present the first polynomial-time structural attack on the McEliece cryptosystem based on elliptic codes (key generation is described in Algorithm~\ref{alg:McEliece_keygen}) that works for arbitrary {effective} divisors of elliptic codes and for any number of inserted errors $t$. In the remainder of this section, we assume that the elements of the divisor $D$ are chosen uniformly at random from the set of all rational points of the elliptic curve $\mathcal{E}(\mathbb{F}_q)$. {In the proofs, we will focus on the case $\operatorname{char}(\mathbb{F}_q) > 3$, as this case currently presents the greatest practical interest for elliptic codes. In this case, the elliptic curve can be reduced to the short Weierstrass form
\[
y^{2} = x^{3} + a_{4}x + a_{6}, \quad \Delta = -16\left(4a_{4}^{3} + 27a_{6}^{2}\right), \quad j = 1728 \frac{4a_{4}^{3}}{4a_{4}^{3} + 27a_{6}^{2}}.
\]}

\begin{proposition}[ {\cite[Prop.~III.3.4, Cor.~III.3.5]{Silverman09}}]
\label{th:abel-jacobi}
Let $\mathcal{E}/\FF_q$ be an elliptic curve with neutral element $P_\infty$, and let $\operatorname{Div}^0(\mathcal{E})$ denote the group of divisors of degree $0$. The map
\[
\sigma:\ \operatorname{Div}^0(\mathcal{E}) \longrightarrow \mathcal{E},
\qquad
\sigma\Bigl(\sum_P n_P P\Bigr) \;=\; \bigoplus_P\, [n_P]P ,
\]
where $\oplus$ and $[n]\cdot$ denote the group operation and scalar multiplication on $\mathcal{E}$, respectively, is a surjective group homomorphism whose kernel is exactly the group of principal divisors. Equivalently, for $D = \sum_P n_P P \in \operatorname{Div}(\mathcal{E})$,
\[
D \sim 0
\iff
\deg(D) = 0 \ \text{ and } \ \bigoplus_P [n_P]P = P_\infty ,
\]
and $\sigma$ induces an isomorphism $\operatorname{Pic}^0(\mathcal{E}) \cong \mathcal{E}$.
\end{proposition}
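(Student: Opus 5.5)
The plan is to follow the standard route in Silverman's treatment, taking the chord-and-tangent group law on $\mathcal{E}(\overline{\FF}_q)$ with neutral element $P_\infty$ as known, together with Riemann--Roch for genus $1$ (every divisor $A$ of degree $d\ge 1$ has $\ell(A)=d$). Everything should be deduced from one basic lemma.

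\textbf{Basic lemma.} For every $D\in\operatorname{Div}^0(\mathcal{E})$ there is a unique point $P$ with $D\sim P-P_\infty$.

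\emph{Existence.} Apply Riemann--Roch to $D+P_\infty$, which has degree $1$, so $\ell(D+P_\infty)=1$. Take a nonzero $f\in\mathscr{L}(D+P_\infty)$. Then $(f)+D+P_\infty$ is effective of degree $1$, hence equals a single point $P$. This gives $D\sim P-P_\infty$.

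\emph{Uniqueness.} Suppose $P-P_\infty\sim P'-P_\infty$ with $P\neq P'$. Then $P\sim P'$, so some $f$ has $(f)=P-P'$, and $f\in\mathscr{L}(P')$. But $\ell(P')=1$, so $\mathscr{L}(P')$ consists only of the constants. Hence $f$ is constant, a contradiction.

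Using the lemma, define $\kappa(D)$ to be this unique $P$. The remaining work is to identify $\kappa$ with $\sigma$ and derive the stated consequences, in four steps.

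\begin{enumerate}
\item \textbf{Bijection.} The map $\kappa:\operatorname{Pic}^0(\mathcal{E})\to\mathcal{E}$ is well defined by uniqueness. It is injective by uniqueness and surjective because $\kappa(P-P_\infty)=P$. So it is a bijection.

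\item \textbf{Compatibility with the group law (main obstacle).} We must show $\kappa(P+Q-2P_\infty)=P\oplus Q$, i.e.\ that $(P\oplus Q)+P_\infty\sim P+Q$. Take the line $L$ through $P$ and $Q$ (the tangent line if $P=Q$), meeting $\mathcal{E}$ in a third point $R$. Take the vertical line $L'$ through $R$, which meets $\mathcal{E}$ again in $P\oplus Q$. Then
\[
\left(\frac{L}{L'}\right)=(P+Q+R-3P_\infty)-(R+(P\oplus Q)-2P_\infty).
\]
Hence $P+Q-2P_\infty\sim (P\oplus Q)-P_\infty$. The degenerate cases, where a line is $P_\infty$ itself or tangent at a flex, must be checked separately; this bookkeeping is where care is needed. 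In characteristic $>3$ the short Weierstrass form fixed above makes these computations explicit.

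\item \textbf{Identifying $\sigma$.} It follows that $\kappa$ transports the group structure of $\operatorname{Pic}^0(\mathcal{E})$ to $(\mathcal{E},\oplus)$. Now write $D=\sum n_P P=\sum n_P(P-P_\infty)$, using $\deg D=0$. Additivity of $\kappa$ then gives $\kappa(D)=\bigoplus[n_P]P=\sigma(D)$. Thus $\sigma$ factors as
\[
\operatorname{Div}^0(\mathcal{E})\to\operatorname{Pic}^0(\mathcal{E})\xrightarrow{\ \kappa\ }\mathcal{E}.
\]
Since both maps are homomorphisms and the first is surjective, $\sigma$ is a surjective homomorphism. Its kernel is the kernel of the quotient map, namely the principal divisors.

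\item \textbf{The equivalence for arbitrary $D$.} A principal divisor has degree $0$. Therefore $D\sim 0$ holds iff $\deg D=0$ and $\sigma(D)=P_\infty$, which is the stated equivalence.
\end{enumerate}
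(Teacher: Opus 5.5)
Your proposal is correct and is essentially the standard argument of Silverman, Prop.~III.3.4 and Cor.~III.3.5, which the paper cites without giving a proof of its own: Riemann--Roch gives existence and uniqueness of $P$ with $D\sim P-P_\infty$, the line quotient $L/L'$ gives compatibility with $\oplus$, and writing $D=\sum n_P(P-P_\infty)$ gives the corollary. The degenerate cases you flag are handled uniformly there by using homogeneous linear forms divided by $Z$, with $L'=Z$ when $R=P_\infty$.
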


\begin{lemma}\label{lem:involution}
Let $P \in \mathcal{E}(\FF_q)\setminus\{P_\infty\}$ and let $f_2^{(P)}$ be the non-constant basis function of $\mathscr{L}(2P)$ from Lemma~\ref{lem:R-R_single}. Then
$\operatorname{div}\bigl(f_2^{(P)}\bigr) = [2]P + P_\infty - 2P$, and for every
$R \in \mathcal{E}(\FF_q)\backslash \{P\}$ and every $\widetilde f = a f_2^{(P)} + b$, $a \neq 0$,
\[
\widetilde f^{\,-1}\bigl(\widetilde f(R)\bigr) \cap \mathcal{E}(\FF_q)
= \{R, \tau_P(R)\}, \qquad \tau_P(R) = [2]P \ominus R.
\] 
Consequently, for distinct $P, P' \in \mathcal{E}(\FF_q)\setminus\{P_\infty\}$,
\[
\exists R:\ \tau_P(R) = \tau_{P'}(R)
\iff \forall R:\ \tau_P(R) = \tau_{P'}(R)
\iff [2]P = [2]P' \iff P \ominus P' \in \mathcal{E}[2].
\]
\end{lemma}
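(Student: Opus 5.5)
The proof splits into three pieces: compute the divisor of $f_2^{(P)}$ via the Abel--Jacobi map, identify the fibers of $\widetilde f$ from their divisors, and compare the resulting involutions for two points.

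\textbf{Step 1: the divisor of $f_2^{(P)}$.} Since $f_2^{(P)}\in\mathscr{L}(2P)$ is non-constant, it has a pole of order exactly $2$ at $P$; a pole of order $1$ is impossible because $\ell(P)=1$ on a genus-one curve. Hence $\operatorname{div}(f_2^{(P)}) = Z - 2P$ with $Z\ge 0$ of degree $2$. By Proposition~\ref{th:abel-jacobi}, $\sigma(Z)\ominus[2]P = P_\infty$, so $Z = Q_1+Q_2$ with $Q_1\oplus Q_2=[2]P$.

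To pin $Z$ down to $[2]P+P_\infty$ we need the normalization of $f_2^{(P)}$ from Lemma~\ref{lem:R-R_single}. I expect that lemma to give $f_2^{(P)}$ explicitly as the tangent-line quotient, e.g.\ $(y+y_P+\lambda(x-x_P))/(x-x_P)$ up to the constant term. If so, one reads off that it vanishes at $P_\infty$, by comparing degrees in $x$ and $y$, and at the third intersection point of the tangent line at $P$. That point is $\ominus[2]P$ or $[2]P$ depending on sign conventions. Abel--Jacobi then forces the remaining zero to be the complementary point, matching $[2]P+P_\infty$. Any other normalization differs by an additive constant, which does not affect the rest of the argument.

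\textbf{Step 2: the fibers of $\widetilde f$.} Fix $R\neq P$ and put $c=\widetilde f(R)$, which is finite. The function $\widetilde f-c = a f_2^{(P)} + (b-c)$ is again a non-constant element of $\mathscr{L}(2P)$. Its divisor is therefore $R + R' - 2P$ with $R\oplus R' = [2]P$, again by Proposition~\ref{th:abel-jacobi}. So $R' = [2]P\ominus R = \tau_P(R)$.

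Every point $S\neq P$ with $\widetilde f(S)=c$ is a zero of $\widetilde f - c$, so $S\in\{R,\tau_P(R)\}$. The point $P$ itself is a pole, not in the fiber. Conversely, $\tau_P(R)$ is rational because $P$ and $R$ are. Moreover $\tau_P(R)\neq P$: otherwise $R = [2]P\ominus P = P$. This proves the fiber statement, including the degenerate case $R=\tau_P(R)$, where the zero is a double one.

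\textbf{Step 3: comparing two points.} We have $\tau_P(R)=\tau_{P'}(R)$ iff $[2]P\ominus R = [2]P'\ominus R$ iff $[2]P=[2]P'$, by cancellation in the group. The middle condition does not depend on $R$, which gives the equivalence of ``for some $R$'' and ``for all $R$''. Finally, $[2]P=[2]P'$ iff $[2](P\ominus P')=P_\infty$ iff $P\ominus P'\in\mathcal{E}[2]$.

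The only real obstacle is the exact identification of $\operatorname{div}(f_2^{(P)})$ in Step~1. It depends on the precise form of $f_2^{(P)}$ in Lemma~\ref{lem:R-R_single}, and in particular on the sign convention for the zero, $[2]P$ versus $\ominus[2]P$. Steps~2 and~3 use only the Abel--Jacobi relation and are insensitive to this choice.
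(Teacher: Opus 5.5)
Your Steps~2 and~3 are correct and match the paper's argument: the pole divisor is exactly $2P$, Abel--Jacobi is applied to the zero divisor of $\widetilde f-\widetilde f(R)$, and the comparison is the cancellation $\tau_P(R)\ominus\tau_{P'}(R)=[2]P\ominus[2]P'$.

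The gap is Step~1. The first claim of the lemma, $\operatorname{div}(f_2^{(P)})=[2]P+P_\infty-2P$, is never proved. You make it conditional on a guessed shape of $f_2^{(P)}$, and that guess is wrong as written in two ways.
\begin{itemize}
\item With denominator $(x-x_P)$ to the first power, the quotient has a simple pole at $P_\infty$, since $v_{P_\infty}(y)=-3$ and $v_{P_\infty}(x-x_P)=-2$. So it is not in $\mathscr{L}(2P)$, and your ``comparing degrees'' would show a pole at infinity, not a zero.
\item The relevant line is the tangent at $-P$, not at $P$. A tangent at $P$ divided by $(x-\alpha)^2$ has divisor $\ominus[2]P+P_\infty-2(-P)$, so it lies in $\mathscr{L}(2(-P))$. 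This mix-up is where your ``$\ominus[2]P$'' alternative comes from.
\end{itemize}
Your remark that another normalization ``differs by an additive constant, which does not affect the rest of the argument'' holds for Steps~2--3 but not for this claim. The zero divisor of $f_2^{(P)}+c$ is the fibre of $f_2^{(P)}$ over $-c$. So the divisor formula is exactly the assertion that the normalization of Lemma~\ref{lem:R-R_single} satisfies $f_2^{(P)}(P_\infty)=0$.

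This closes in one line inside your own framework, and there is no real ``$[2]P$ versus $\ominus[2]P$'' ambiguity.
\begin{itemize}
\item For $P\notin\mathcal{E}[2]$, Lemma~\ref{lem:R-R_single} gives $f_2^{(P)}=(y+A_2(x))/(x-\alpha)^2$ with $A_2\in\FF_q[x]$ of degree at most $1$. Hence $v_{P_\infty}(y+A_2(x))=-3$ and $v_{P_\infty}\bigl((x-\alpha)^2\bigr)=-4$, so $v_{P_\infty}(f_2^{(P)})=1$.
\item For $P\in\mathcal{E}[2]$, $f_2^{(P)}=1/(x-\alpha)$ vanishes to order $2$ at $P_\infty$.
\end{itemize}
You already have $Z=Q_1+Q_2$ with $Q_1\oplus Q_2=[2]P$. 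Putting $Q_1=P_\infty$ forces $Q_2=[2]P$ in both cases. In the $2$-torsion case $[2]P=P_\infty$, which matches the double zero.

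Completed this way, your Abel--Jacobi route is slightly cleaner than the paper's. The paper reads the zero $[2]P$ off the third intersection of the tangent at $-P$ and handles the $2$-torsion case separately.
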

\begin{proof}
Since $f_2^{(P)} \in \mathscr{L}(2P)$ is non-constant, its pole divisor is exactly $2P$, so $f_2^{(P)}$ is a morphism of degree $2$. The explicit divisor follows because the numerator $y - \beta' - \gamma(x-\alpha)$ is the tangent to $\mathcal{E}$ at $-P$ (resp.\ the numerator is $1$ when $P \in \mathcal{E}[2]$). For $\lambda \in \FF_q$ the zero divisor of $f_2^{(P)}-\lambda$ has degree $2$, say $A+B$, and $A + B - 2P$ is principal, whence $A \oplus B = [2]P$ by Proposition \ref{th:abel-jacobi}. The last equivalence is immediate from $\tau_P(R) \ominus \tau_{P'}(R) = [2]P \ominus [2]P'$.
\end{proof}

\begin{theorem}\label{th:my_attack_D}
Let $\mathcal{C}_\mathscr{L}(D, G)$ be an elliptic code associated with the curve $\mathcal{E}/\mathbb{F}_q$. If $5 \leq \deg(G) \leq \frac{n}{2}-1$, then, given the curve $\mathcal{E}$, the generator matrix $\mathbf{G}_{\operatorname{pub}}$, and three distinct points $P_1, P_2, P_3 \in D$, the entire divisor $D$ can be recovered in $\mathcal{O}\!\left(k^2n^2 + |\mathcal{E}(\mathbb{F}_q)| + n\right)$ operations in $\mathbb{F}_q$ with probability $1-\mathcal{O}\!\left(|\mathcal{E}(\FF_q)|^{-2}\right)$. 
\end{theorem}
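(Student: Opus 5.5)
Given $P_1,P_2,P_3$, we puncture at $P=P_1$ and compute the filtration to get the codes $\mathcal{U}_i=\cC_\cL(D',iP_1)$ with $D'=D-P_1$. In particular we want $\mathcal{U}_2=\operatorname{ev}_{D'}(\mathscr{L}(2P_1))=\langle \mathbf{1},\operatorname{ev}_{D'}(f_2^{(P_1)})\rangle$.

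\textbf{Step 1: computing $\mathcal{U}_2$.} Starting from the punctured and shortened codes $\mathcal{V}_0,\mathcal{V}_{-1}$, Propositions~\ref{prop:V_i} and~\ref{prop:U_i} give the sequence $\mathcal{V}_{-i}$. We then obtain $\mathcal{U}_2$ as $\{\mathbf z:\mathbf z\star\mathcal{V}_j\subseteq\mathcal{V}_{j+2}\}$ for a suitable $j$. The hypothesis $5\le\deg(G)\le n/2-1$ is exactly what keeps all indices in the admissible ranges $\deg(G)+j>2g=2$ and $\deg(G)+\ell<n-4$. This is also the condition used in \cite{Sh25}.

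Each membership condition is a linear system. Using systematic forms and Gaussian elimination on star products, computing it costs $\mathcal{O}(k^2n^2)$ operations. Because $\mathcal{E}$ is public, we can also evaluate $f_2^{(P_1)}$ directly. Lemma~\ref{lem:involution} makes this explicit: $f_2^{(P_1)}$ is (tangent line at $-P_1$)$/(x-x_{P_1})$, or $1/(x-x_{P_1})$ if $P_1\in\mathcal{E}[2]$.

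\textbf{Step 2: identifying the coordinates.} Pick a nonconstant $\mathbf{v}\in\mathcal{U}_2$. Then $\mathbf v=\operatorname{ev}_{D'}(af_2^{(P_1)}+b)$ with unknown $a\neq 0$ and $b$.

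We fix these unknowns using the hints. The coordinates of $\mathbf v$ at $P_2$ and $P_3$ are not yet known, but we can still normalize. For every pair of coordinate indices $(i,j)$, equate $v_i=a f(P_2)+b$ and $v_j=a f(P_3)+b$. It is cheaper to avoid pairs: apply the same construction with $P=P_2$ to obtain a second vector $\mathbf w\in\cC_\cL(D-P_2,2P_2)$. The hint $P_1$ (which is removed in the first puncturing) and the joint structure are then used.

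A cleaner route works as follows. Solve for $(a,b)$ by requiring that the multiset $\{v_i\}$ equal $\{af(R)+b : R\in D'\}$. Equivalently, we only need the indices of $P_2$ and $P_3$. Find them by comparing the two punctured filtrations: the coordinate deleted in one and kept in the other identifies the index of $P_1$ in the $P_2$-punctured code, and vice versa. This pins $a,b$ from $f(P_2),f(P_3)$.

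\textbf{Step 3: recovering the support.} Tabulate $\lambda(R)=af(R)+b$ for all $R\in\mathcal{E}(\FF_q)\setminus\{P_1\}$, in $\mathcal{O}(|\mathcal{E}(\FF_q)|)$ operations with a hash table. For each coordinate $i$, Lemma~\ref{lem:involution} says the preimage of $v_i$ is $\{R,\tau_{P_1}(R)\}$.

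To break this two-fold ambiguity, do the same with $\mathbf w$ coming from $P_2$. This gives a second pair $\{R,\tau_{P_2}(R)\}$, and we take the common point. By Lemma~\ref{lem:involution} the two pairs coincide only when $P_1\ominus P_2\in\mathcal{E}[2]$. In that case use $P_3$; the hint points fail jointly only with probability $\mathcal{O}(|\mathcal{E}(\FF_q)|^{-2})$ under uniform choice. The matching costs $\mathcal{O}(n)$.

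\textbf{Main obstacle.} The main obstacle is Step 2: aligning the unknown affine normalization $(a,b)$ and the unknown coordinate positions of the hint points across the two filtrations. I would first try to resolve it by the puncture-index comparison above, and check that this requires no search over $q$.
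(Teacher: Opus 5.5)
\textbf{There is a genuine gap in Step~2, and a second one in the probability claim.} Your Steps~1 and~3 are the paper's argument: compute $\mathcal{U}_2^{(P_j)}$, normalize a nonconstant vector against the explicitly known $f_2^{(P_j)}$, read the fibres $\{P_i,\tau_{P_j}(P_i)\}$ off a hash table, and intersect them with the pairs from a second hint satisfying $[2]P_j\neq[2]P_l$.

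The problem with Step~2 is that you read the hints as bare points whose coordinate positions are unknown, and the remedy you propose does not work. Puncturing and shortening ``at $P_1$'' already requires knowing the coordinate of $P_1$. At an arbitrary index $i$ the filtration yields $\mathcal{C}_\mathscr{L}(D-Q_i,2Q_i)$ for whatever unknown point $Q_i$ sits there. Comparing two such filtrations only reveals the indices you chose to delete, never which point lives at them. Your multiset condition involves the unknown $D'$. Relaxing it to $R$ ranging over $\mathcal{E}(\FF_q)$ is exactly the exhaustive $(a,b)$-search of the paper's hint-free attack, which costs on the order of $q^2$ and breaks the claimed bound. The paper sidesteps all of this by taking the hints to be the points at known coordinates (w.l.o.g.\ the first three). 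Then $(a,b)$ is simply the solution of the $2\times 2$ system $a f_2^{(P_1)}(P_2)+b=g_2$, $a f_2^{(P_1)}(P_3)+b=g_3$, found in $\mathcal{O}(1)$. That is the reading you should adopt.

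Even after this fix, your probability claim misses a failure mode. The system for $\mathcal{U}_2^{(P_j)}$, normalized with the other two hints $P_l,P_m$, is singular exactly when $f_2^{(P_j)}(P_l)=f_2^{(P_j)}(P_m)$. Equivalently, $P_m=\tau_{P_j}(P_l)$, i.e.\ $P_1\oplus P_2\oplus P_3=[3]P_j$. For a fixed $j$ this has probability about $1/|\mathcal{E}(\FF_q)|$. So a plan that always normalizes the $P_1$-code with $P_2,P_3$, and guards only against $P_1\ominus P_2\in\mathcal{E}[2]$, fails with probability $\Theta(|\mathcal{E}(\FF_q)|^{-1})$.

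You need to use the three hints symmetrically: $D$ is recovered whenever some $j\neq l$ has both systems nonsingular and $[2]P_j\neq[2]P_l$. Failure then requires one of three things:
\begin{enumerate}
\item all three points lie in one coset of $\mathcal{E}[2]$;
\item one system is singular and $[2]P_j=[2]P_l$ for the remaining two;
\item two systems are singular, which forces $P_j\ominus P_l\in\mathcal{E}[3]\setminus\{P_\infty\}$.
\end{enumerate}
In each case, once $P_j$ is fixed, only $\mathcal{O}(1)$ choices remain for the other two points, bounded via $|\mathcal{E}[2](\FF_q)|\le 4$ and $|\mathcal{E}[3](\FF_q)|\le 9$. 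This gives failure probability $\mathcal{O}(|\mathcal{E}(\FF_q)|^{-2})$.

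A minor slip: $f_2^{(P_1)}$ is the tangent at $-P_1$ divided by $(x-\alpha)^2$, not by $(x-\alpha)$.
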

\begin{proof}
Without loss of generality, assume that the first three points $P_1, P_2, P_3$ of the divisor $D$ are known. Since $5 \leq \deg(G) \leq \frac{n}{2}-1$, using the generator matrix of the code and Propositions~\ref{prop:V_i} and~\ref{prop:U_i}, we can compute the codes
\begin{equation}\label{eq:V2U2}
\mathcal{V}_2^{(P_1)} = \mathcal{C}_\mathscr{L}(D - P_1, G + 2P_1), \quad \quad \mathcal{U}_2^{(P_1)} = \mathcal{C}_\mathscr{L}(D - P_1, 2P_1).
\end{equation}

Let $\mathcal{E}[2]$ denote the group of 2-torsion points. From Lemma~\ref{lem:R-R_single} (see Appendix \ref{sec:appendix_1}), one of the bases of the Riemann--Roch space associated with the one-point divisor $2P_1$, where $P_1 \not\in \mathcal{E}[2]$, consists of the functions:

\begin{equation}\label{eq:U2_basis}
    \mathscr{L}_b(2P_1) = \left\{1, f_2^{(P_1)} = \frac{y - \beta' - \gamma(x - \alpha)}{(x-\alpha)^2}\right\}, \quad \text{for } P_1 = (\alpha, \beta) \text{ and } -P_1 = (\alpha, \beta'),
\end{equation}
where $\beta \neq 0$ and $\gamma = \begin{cases}
\frac{\alpha^2 + a_4 + a_1\beta^\prime}{a_1\alpha + a_3} & \text{if } \operatorname{char}(\mathbb{F}_q) = 2,\\[1.2ex]
\frac{3\alpha^2 + a_4}{2\beta'} & \text{if } \operatorname{char}(\mathbb{F}_q) \geq 3.
\end{cases}$

For brevity, we consider only the case $\operatorname{char}(\mathbb{F}_q) \geq 3$, since in characteristic $2$ all computations are analogous, except for the formulas for the torsion points $\mathcal{E}[2]$ and the involution $-P$.

Let $\operatorname{char}(\mathbb{F}_q) \geq 3$. If $P_1 \in \mathcal{E}[2]$, i.e. $P_1 = (\alpha, 0)$, then
\[
\mathscr{L}_b(2P_1) = \left\{1, f_2^{(P_1)} = \frac{1}{(x-\alpha)}\right\}.
\]

Since the curve equation $\mathcal{E}$ and the point $P_1 = (\alpha, \beta)$ are known, the function $f_2^{(P_1)}$ can also be evaluated at any point of the curve.

Consider the basis of the code $\mathcal{U}_2^{(P_1)}$ obtained in \eqref{eq:V2U2}. Using the two known points, we find a linear transformation that maps this basis to the basis corresponding to \eqref{eq:U2_basis}:
\[
\mathbf{G}_{\mathcal{U}_2^{(P_1)}} = 
\begin{pmatrix}
g_{1,2} & g_{1, 3} & \dots & g_{1, n}\\
g_{2,2} & g_{2, 3} & \dots & g_{2, n}
\end{pmatrix} \longrightarrow 
\begin{pmatrix}
1 & 1 & \dots & 1\\
f_2^{(P_1)}(P_2) & f_2^{(P_1)}(P_3) & \dots & f_2^{(P_1)}(P_n) 
\end{pmatrix}.
\]
To do this, we need to solve a system of two linear equations for a particular $i$:
\begin{equation}\label{eq:f_2_system}
\begin{cases}
a f_2^{(P_1)}(P_2) + b = g_{i,2},\\
a f_2^{(P_1)}(P_3) + b = g_{i,3},
\end{cases}
\quad \text{where } (g_{i, 2}, \dots, g_{i, n}) \neq (c, \dots, c) \text{ for } c\in \mathbb{F}_q.
\end{equation}

Clearly, at least one vector $(g_{i,2}, \dots, g_{i,n})$ of the obtained basis satisfies the requirement of the system. The solution of system~\eqref{eq:f_2_system} is unique if and only if the values of $f_2$ at the two known points do not coincide, i.e. $f_2^{(P_1)}(P_2) \neq f_2^{(P_1)}(P_3)$. We first describe the fibre sets of $f_2^{(P_1)}$, since they determine both this condition and the sets of candidate points obtained below.

1. Let $\beta \neq 0$ ($P_1 \notin \mathcal{E}[2]$). Fix $P_2 = (x_2,y_2)$, the points $P = (x,y)$ with $f_2(P) = f_2(P_2)$ are the solutions of
\begin{equation}\label{eq:eq_probability}
\frac{y_2 - \beta' - \frac{3\alpha^2 + a_4}{2\beta'}(x_2 - \alpha)}{(x_2-\alpha)^2}
=
\frac{y - \beta' - \frac{3\alpha^2 + a_4}{2\beta'}(x - \alpha)}{(x-\alpha)^2}.
\end{equation}
Introduce the notation
\[
\gamma = \frac{3\alpha^2 + a_4}{2\beta'},
\qquad
\lambda = \frac{y_2 - \beta' - \gamma(x_2 - \alpha)}{(x_2 - \alpha)^2},
\]
so that~\eqref{eq:eq_probability} becomes
\[
\frac{y - \beta' - \gamma(x-\alpha)}{(x-\alpha)^2} = \lambda
\quad\Longrightarrow\quad
y = \lambda(x-\alpha)^2 + \gamma(x-\alpha) + \beta'.
\]
Substituting $y$ into the curve equation yields
\[
\left[\lambda(x-\alpha)^2 + \gamma(x-\alpha) + \beta'\right]^2 = x^3 + a_4x + a_6 .
\]
Let $d = x - \alpha$. Then $y = \lambda d^2 + \gamma d + \beta'$, and the right-hand side can be written as
\[
\begin{split}
(\alpha + d)^3 + a_4(\alpha + d) + a_6
&= \alpha^3 + 3\alpha^2 d + 3\alpha d^2 + d^3 + a_4\alpha + a_4 d + a_6\\
&= \beta'^2 + (3\alpha^2 + a_4)d + 3\alpha d^2 + d^3\\
&= \beta'^2 + 2\beta'\gamma d + 3\alpha d^2 + d^3 .
\end{split}
\]
The left-hand side is
\[
(\lambda d^2 + \gamma d + \beta')^2
= \lambda^2 d^4 + 2\lambda\gamma d^3 + 2\lambda\beta' d^2 + \gamma^2 d^2
+ 2\gamma\beta' d + \beta'^2 .
\]
Equating both sides, we get
\[
\lambda^2 d^4 + 2\lambda\gamma d^3 + 2\lambda\beta' d^2 + \gamma^2 d^2 + 2\gamma\beta' d
+ \beta'^2
= d^3 + 3\alpha d^2 + 2\beta'\gamma d + \beta'^2 ,
\]
\[
d^2\left[\lambda^2 d^2 + (2\lambda\gamma - 1)d + (2\lambda\beta' + \gamma^2 - 3\alpha)\right]
= 0 .
\]
The factor $d^2$ was introduced when clearing the denominator $(x-\alpha)^2$ and is extraneous: it corresponds to $x = \alpha$, that is, to the points $\pm P_1$, and $P_1$ is a pole of $f_2$. The solutions of~\eqref{eq:eq_probability} are therefore given by the roots of
\begin{equation}\label{eq:quadratic_fibre}
\lambda^2 d^2 + (2\lambda\gamma - 1)d + (2\lambda\beta' + \gamma^2 - 3\alpha) = 0 ,
\end{equation}
each root $d$ determining exactly one point $P = (\alpha + d,\ \lambda d^2 + \gamma d + \beta')$, distinct roots giving distinct points. In particular, $d = 0$ is a root of~\eqref{eq:quadratic_fibre} precisely when $2\lambda\beta' + \gamma^2 - 3\alpha = 0$, i.e. when $\lambda = f_2^{(P_1)}(-P_1)$, and the corresponding solution is the point $-P_1$, at which $f_2^{(P_1)}$ is regular. If $\lambda = 0$, equation~\eqref{eq:quadratic_fibre} degenerates to a linear one with the single root $d = \gamma^2 - 3\alpha$, corresponding to the point $[2]P_1$. The second point of this fibre set is $P_\infty$, which is a zero of $f_2^{(P_1)}$.

By construction $P_2$ is a solution, so~\eqref{eq:quadratic_fibre} has a root in $\FF_q$, hence both of its roots are rational. Consequently, for every $P \in \mathcal{E}(\FF_q)$ the set
\[
\left\{\widetilde{P} \in \mathcal{E}(\FF_q) \;:\; f_2^{(P_1)}(\widetilde{P}) = f_2^{(P_1)}(P)\right\}
\]
consists of at most two rational points, and of exactly one point precisely when the discriminant of~\eqref{eq:quadratic_fibre} vanishes (see Remark~\ref{rem:ramification}).

2. Let $\beta = 0$ ($P_1 \in \mathcal{E}[2]$). Then $f_2^{(P_1)} = \frac{1}{x-\alpha}$, and we have to solve
\begin{equation}\label{eq:eq_probability_2}
\frac{1}{x_2 - \alpha} = \frac{1}{x - \alpha} .
\end{equation}
Clearly, the solutions are exactly the points $\pm P_2 = (x_2, \pm y_2)$, so again the fibre set consists of at most two rational points, and of one point exactly when $P_2 \in \mathcal{E}[2]$.

Assume now that $f_2^{(P_1)}(P_2) \neq f_2^{(P_1)}(P_3)$. Then system~\eqref{eq:f_2_system} has a unique solution $(a,b)$, and, replacing the chosen basis vector by $a^{-1}\bigl(\mathbf{g}_i - b \cdot (1,\dots,1)\bigr)$, we obtain the generator matrix
\begin{equation}\label{eq:f2_basis}
\mathbf{G}_{\mathcal{U}_2^{(P_1)}} =
\begin{pmatrix}
1 & 1 & \dots & 1\\
f_2^{(P_1)}(P_2) & f_2^{(P_1)}(P_3) & \dots & f_2^{(P_1)}(P_n)
\end{pmatrix}.
\end{equation}

Moreover, as already shown, for each position in the divisor $D$ we have at most two possible points. That is, we obtain
\[
\widetilde{D}^{(P_1)} = \left\{\{P_2, \widetilde{P}_2^{(P_1)}\}, \dots, \{P_{n}, \widetilde{P}_{n}^{(P_1)}\}\right\}.
\]

Next, to reduce the number of candidates and recover the original divisor $D$, we perform the same procedure for the codes
\[
\mathcal{U}_2^{(P_2)} = \mathcal{C}_\mathscr{L}(D -  P_2, 2P_2) \quad \text{or} \quad \mathcal{U}_2^{(P_3)} = \mathcal{C}_\mathscr{L}(D -  P_3, 2P_3)
\]
and compute the corresponding sets
\[
\widetilde{D}^{(P_2)}= \left\{\{P_1, \widetilde{P}_1^{(P_2)}\}, \{P_3, \widetilde{P}_3^{(P_2)}\}, \dots, \{P_{n}, \widetilde{P}_{n}^{(P_2)}\}\right\},
\]
\[
\widetilde{D}^{(P_3)} = \left\{\{P_1, \widetilde{P}_1^{(P_3)}\}, \{P_2, \widetilde{P}_2^{(P_3)}\}, \{P_4, \widetilde{P}_4^{(P_3)}\}, \dots, \{P_{n}, \widetilde{P}_{n}^{(P_3)}\}\right\}.
\]

By Lemma~\ref{lem:involution}, the candidate set attached to the $i$-th position of $\widetilde{D}^{(P_1)}$, $\widetilde{D}^{(P_2)}$, $\widetilde{D}^{(P_3)}$ is $\{P_i, \tau_{P_1}(P_i)\}$, $\{P_i, \tau_{P_2}(P_i)\}$, $\{P_i, \tau_{P_3}(P_i)\}$ respectively. If $[2]P_j \neq [2]P_l$ for some $j \neq l$, then $\tau_{P_j}(P_i) \neq \tau_{P_l}(P_i)$
\emph{for every} $i$, and therefore
\[
\{P_i, \tau_{P_j}(P_i)\} \cap \{P_i, \tau_{P_l}(P_i)\} = \{P_i\}, \qquad i \geq 4.
\]
The condition $[2]P_j \neq [2]P_l$ is equivalent to $P_j \ominus P_l \notin \mathcal{E}[2]$ and is verified directly from the known points $P_1, P_2, P_3$. It fails for all three pairs only if $P_1, P_2, P_3$ lie in one coset of $\mathcal{E}[2]$, in which case $\widetilde{D}^{(P_1)} = \widetilde{D}^{(P_2)} = \widetilde{D}^{(P_3)}$ and the three ``hints'' carry no separating information.

Thus, to recover the original divisor $D$, we compute the intersection of the corresponding elements in the sets $\widetilde{D}^{(P_1)}$, $\widetilde{D}^{(P_2)}$ and $\widetilde{D}^{(P_3)}$:
\[
D = P_1 + P_2 + P_3 + \sum_{i = 4}^{n}\{P_i, \widetilde{P}_i^{(P_1)}\} \cap \{P_i, \widetilde{P}_i^{(P_2)}\}\cap \{P_i, \widetilde{P}_i^{(P_3)}\}.
\]

It remains to estimate the probability that the candidate sets $\widetilde{D}^{(P_j)}$ can be constructed and that a pair of candidate sets provides the required separation. Let
\[
N=|\mathcal{E}(\FF_q)|,\quad
N_2=|\mathcal{E}[2](\FF_q)|,\quad
N_3=|\mathcal{E}[3](\FF_q)|.
\]
Since $D$ is chosen uniformly among all sets of $n$ distinct points of $\mathcal{E}(\FF_q)$, the triple $(P_1,P_2,P_3)$ is a uniformly distributed ordered triple of pairwise distinct rational points.

For $j,l,m \in \{1,2,3\}$, where $j,l,m$ are distinct, define the events
\[
\mathcal{A}_j:
f_2^{(P_j)}(P_l)\neq f_2^{(P_j)}(P_m),
\qquad
\mathcal{B}_{jl}:
[2]P_j\neq[2]P_l.
\]
The event $\mathcal{A}_j$ is precisely the condition that the linear system~\eqref{eq:f_2_system} corresponding to the code $\mathcal{U}_2^{(P_j)}$ has a unique solution. Equivalently, the normalized basis~\eqref{eq:f2_basis} can be recovered using the two remaining known points. By Lemma~\ref{lem:involution}, $\mathcal{A}_j$ fails exactly when
\[
P_m=\tau_{P_j}(P_l), \quad \text{equivalently,} \quad P_l\oplus P_m=[2]P_j.
\]
Writing $S=P_1\oplus P_2\oplus P_3$, this condition can also be expressed as $S=[3]P_j$.

On the other hand, Lemma~\ref{lem:involution} shows that
$\mathcal{B}_{jl}$ is equivalent to
\[
\tau_{P_j}(P)\neq\tau_{P_l}(P)
\quad\text{for every }P\in\mathcal{E}(\FF_q).
\]
Thus, if $\mathcal{B}_{jl}$ holds, the candidate sets obtained from $P_j$ and $P_l$ distinguish the two possible points at every position simultaneously. Consequently, the divisor $D$ is recoverable whenever
\begin{equation}\label{eq:probability_condition}
 \exists\, j\neq l:\quad
 \mathcal{A}_j\wedge\mathcal{A}_l\wedge\mathcal{B}_{jl}.
\end{equation}

We next bound the probability that condition~\eqref{eq:probability_condition} fails. We distinguish three cases according to the number of events $\mathcal{A}_j$ that fail.

\begin{itemize}
\item \emph{No event $\mathcal{A}_j$ fails.}
In this case all three candidate sets can be constructed. Hence, \eqref{eq:probability_condition} fails only if no pair of the known points separates the corresponding involutions, i.e., $[2]P_1=[2]P_2=[2]P_3$. For a fixed $P_1$, both $P_2$ and $P_3$ must then belong to the coset $P_1\oplus\mathcal{E}[2]$ and must be distinct from $P_1$ and from each other. Therefore, there are at most $(N_2-1)(N_2-2)$ possible ordered pairs $(P_2,P_3)$. The corresponding probability is
at most
\[
\frac{(N_2-1)(N_2-2)}
{(N-1)(N-2)}.
\]

\item \emph{Exactly one event $\mathcal{A}_m$ fails.}
In this case the only potentially useful pair is $(j,l)$, where
$j,l,m \in \{1,2,3\}$. Thus, condition~\eqref{eq:probability_condition} fails precisely when $[2]P_j = [2]P_l$. For a fixed $P_j$, there are at most $N_2-1$ choices for $P_l$ satisfying this equality and $P_l \neq P_j$. 

Once $P_j$ and $P_l$ are fixed, the failure of $\mathcal{A}_m$ requires $[2]P_m = P_j \oplus P_l ,$ which leaves at most $N_2$ choices for $P_m$. Summing over the three possible choices of $m$, we obtain the upper bound
\[
\frac{3N_2(N_2-1)}{(N-1)(N-2)} .
\]

\item \emph{At least two events $\mathcal{A}_j,\mathcal{A}_l$ fail.} In this case
\[
[3]P_j=S=[3]P_l \quad \text{and} \quad P_j\ominus P_l\in\mathcal{E}[3]\setminus\{P_\infty\}.
\]
For a fixed $P_j$, there are therefore at most $N_3-1$ possible choices of $P_l$. Once $P_j$ and $P_l$ are fixed, the failure of $\mathcal{A}_j$ determines $P_m=\tau_{P_j}(P_l)$ uniquely. Summing over the three possible pairs $(j,l)$ gives the upper bound
\[
\frac{3(N_3-1)}
{(N-1)(N-2)}.
\]
\end{itemize}

Combining the three bounds yields
\[
\Pr\left[D\text{ is not recovered}\right]
\leq
\frac{(N_2-1)(N_2-2)
      +3N_2(N_2-1)
      +3(N_3-1)}
     {(N-1)(N-2)} \leq 1 - \frac{66}{(N-1)(N-2)},
\]
where the last inequality uses $N_2 \leq 4$ and $N_3 \leq 9$. In particular, the divisor $D$ is recovered with probability $1 - \mathcal{O}(|\mathcal{E}(\FF_q)|^{-2})$.

Computing the codes $\mathcal{U}_2^{(P_1)}$, $\mathcal{U}_2^{(P_2)}$, $\mathcal{U}_2^{(P_3)}$ requires constructing generator matrices, which involves computing the Schur product of two codes and then reducing to systematic form via Gaussian elimination. Computing the Schur product gives an overall complexity of $\mathcal{O}(k^2n^2)$  for this step. Solving system \eqref{eq:f_2_system}, which is a system of two linear equations, takes $\mathcal{O}(1)$ operations. To build the candidate sets $\widetilde{D}$, we need to enumerate all points of the curve and evaluate the function $f_2$ at each of them. The obtained values can be stored in a hash table with keys corresponding to the function values. Evaluating $f_2$ requires $\mathcal{O}(1)$ operations, so building the hash table costs $\mathcal{O}(|\mathcal{E}(\mathbb{F}_q)|)$. Searching for all $n$ values in the hash table takes $\mathcal{O}(n)$ operations. Thus, the overall complexity of this step is $\mathcal{O}(|\mathcal{E}(\mathbb{F}_q)| + n) = \mathcal{O}(|\mathcal{E}(\mathbb{F}_q)|)$. Recovering the original set $D$ reduces to pairwise intersection of elements from at least two sets. Since the size of each element in $\widetilde{D}^{(P_1)}$, $\widetilde{D}^{(P_2)}$, and $\widetilde{D}^{(P_3)}$ does not exceed 2, the intersection operation takes $\mathcal{O}(1)$ time per position, and for $n$ positions the total complexity is $\mathcal{O}(n)$. Therefore, the overall complexity of the algorithm is $\mathcal{O}\!\left(k^2n^2 + |\mathcal{E}(\mathbb{F}_q)| + n\right)$. Algorithm~\ref{alg:attack} implements the procedure described above.
\end{proof}

\begin{remark}\label{rem:ramification}
In the proof of Theorem~\ref{th:my_attack_D} the candidate set
\[
\widetilde{D}_i = \left\{ P \in \mathcal{E}(\FF_q) : \widetilde{f}_2(P) = g_i \right\}
= \left\{ P_i, \tau_{P_1}(P_i) \right\}
\]
is the fibre of the degree-$2$ map $\widetilde{f}_2$ over the value $g_i$, and in general it consists of two points. It degenerates to a single point exactly at the ramification points of $\widetilde{f}_2$, that is, by Lemma~\ref{lem:involution},
\[
|\widetilde{D}_i| = 1
\iff \tau_{P_1}(P_i) = P_i
\iff [2]P_i = [2]P_1
\iff P_i \in P_1 \oplus \mathcal{E}[2](\FF_q).
\]
Consequently, the set of positions at which a single candidate occurs is $\{i \neq 1 : P_i \in P_1 \oplus \mathcal{E}[2](\FF_q)\}$, and its cardinality does not exceed $N_2 - 1 \leq 3$, where $N_2 = |\mathcal{E}[2](\FF_q)| \in \{1,2,4\}$. In particular, for a uniformly random divisor $D$ the expected number of such positions equals $(n-1)(N_2-1)/(N-1)$, and if $\mathcal{E}(\FF_q)$ contains no rational $2$-torsion points, i.e. $N_2 = 1$, no such situations occurs at all.
\end{remark}

\begin{algorithm}[H]
\fontsize{9pt}{9pt}\selectfont
\caption{Recovering the divisor $D$}
\label{alg:attack}
\begin{algorithmic}[1]
\Require Elliptic curve $\mathcal{E}/\mathbb{F}_q$ ($\operatorname{char}(\mathbb{F}_q) > 3$); generator matrix $\mathbf{G}_{\operatorname{pub}}$ of the code $\mathcal{C}_\mathscr{L}(D, G)$, where $5 \leq \deg(G) \leq \frac{n}{2}-1$; points $P_1, P_2, P_3 \in D$.
\Ensure Divisor $D$ of the code $\mathcal{C}_\mathscr{L}(D, G)$.
\If{$S = [3]P_j$ for some $j \in \{1,2,3\}$, or $[2]P_1 = [2]P_2 = [2]P_3$, where $S = P_1 \oplus P_2 \oplus P_3$}
\State \Return $\perp$ \Comment{Probability $1 - \mathcal{O}(|\mathcal{E}(\FF_q)|^{-2})$}
\EndIf
\State Compute the generator matrix of $\mathcal{U}_2^{(P_1)} = \mathcal{C}_\mathscr{L}(\mathcal{E}, D - P_1, 2P_1)$ using $\mathbf{G}_{\operatorname{pub}}$, Propositions~\ref{prop:V_i} and~\ref{prop:U_i}. Set $P_1 = (\alpha, \beta)$.
\If{$\beta \neq 0$} \Comment{$P_1 \notin \mathcal{E}[2]$}
    \State $\gamma \gets \frac{3\alpha^2 + a_4}{2\beta}$; $f_2^{(P_1)} \gets \frac{y + \beta + \gamma (x - \alpha)}{(x - \alpha)^2}$.
\Else
    \State $f_2^{(P_1)} \gets \frac{1}{x - \alpha}$.\Comment{$P_1 \in \mathcal{E}[2]$}
\EndIf
\State Find a vector $\mathbf{g} = (g_1, \dots, g_{n-1}) \in \mathcal{U}_2^{(P_1)} \backslash \langle\Vec{\mathbf{1}}\rangle$.
\State Solve the system for $a, b$:
\[
\begin{cases}
a f_2^{(P_1)}(P_2) + b = g_1, \\
a f_2^{(P_1)}(P_3) + b = g_2.
\end{cases}
\]
\State Compute $\widetilde{f}_2^{(P_1)} \gets a f_2^{(P_1)} + b$.
\State For each $i \in \{4, \dots, n\}$, find a pair $\{P, \widetilde{P}^{(P_1)}\} \subset \mathcal{E}(\mathbb{F}_q)$ such that $\widetilde{f}_2^{(P_1)}(P) = \widetilde{f}_2^{(P_1)}(\widetilde{P}^{(P_1)}) = g_i$. Let $\widetilde{D}^{(P_1)} = \{\{P_4, \widetilde{P}_4^{(P_1)}\}, \dots, \{P_n, \widetilde{P}_n^{(P_1)}\}\}$.
\State Repeat steps 1--11, computing $\mathcal{U}_2^{(P_2)} = \mathcal{C}_\mathscr{L}(D - P_2, 2P_2)$ and $\mathcal{U}_2^{(P_3)} = \mathcal{C}_\mathscr{L}(\mathcal{E}, D - P_3, 2P_3)$ to obtain $\widetilde{D}^{(P_2)}$ and $\widetilde{D}^{(P_3)}$.
\State Compute $\{P_i, \widetilde{P}_i^{(P_1)}\} \cap \{P_i, \widetilde{P}_i^{(P_2)}\} \cap \{P_i, \widetilde{P}_i^{(P_3)}\}$ to recover the points $P_i \in D$ for $i \in \{4, \dots, n\}$.
\State\Return $D = \sum_{i = 1}^{n} P_i$.
\end{algorithmic}
\end{algorithm}

\begin{theorem}\label{th:my_attack_G}
Let $\mathcal{C} = \mathcal{C}_\mathscr{L}(D, G)$ be an elliptic code associated with an elliptic curve $\mathcal{E}/\mathbb{F}_q$, $n = |\operatorname{supp}(D)|$, and $G = \sum_i k_i Q_i$ with $k_i > 0$. Then, given the elliptic curve $\mathcal{E}$, the generator matrix $\mathbf{G}_{\operatorname{pub}}$, and the divisor $D$, the second divisor $G$ can be recovered in $\mathcal{O}\!\left(\!\left(|\mathcal{E}(\mathbb{F}_q)| - n\right)^2 \cdot n^2 + \deg(G)\cdot (n + n^2)\right)$ operations in $\mathbb{F}_q$.
\end{theorem}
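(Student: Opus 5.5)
The idea is to recover $G$ as the pole divisor of a generic function of $\mathscr{L}(G)$. Since $D$ and $\mathcal{E}$ are known, every codeword of $\mathcal{C}$ can be lifted back to a rational function on $\mathcal{E}$ by interpolation, and the poles of that function can then be located by enumerating the points of $\mathcal{E}(\FF_q)$ outside $\operatorname{supp}(D)$.

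\textbf{Step 1: a generic codeword.} Take a random codeword $\mathbf{c}=\operatorname{ev}_D(f)$, $f\in\mathscr{L}(G)$; for instance, a random combination of the rows of $\mathbf{G}_{\operatorname{pub}}$. For a generic $f$ we have $(f)_\infty = G$ exactly. Indeed, for each $Q_i$ the subspace $\mathscr{L}(G-Q_i)$ has codimension one (here $\deg(G)\ge 1 > 2g-2$), so $f$ avoids all of these finitely many hyperplanes with high probability. We also want the zeros of $f$ to avoid $\operatorname{supp}(G)$.

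\textbf{Step 2: interpolate $f$ as a quotient.} Write $f = a/b$ with $a,b\in\mathscr{L}(sP_\infty)$ for a suitable $s$ of order $2k$. A valid choice of $b$ is any element of $\mathscr{L}(sP_\infty - G)$; this space is nonzero for $s\ge k+1$, because $G$ is effective of degree $k$. With $b$ chosen this way, $a=fb\in\mathscr{L}(sP_\infty)$. The coefficients of $a,b$ in the monomial basis $\{x^iy^j\}$ from Algorithm~\ref{alg:McEliece_keygen} satisfy the linear system
\[
a(P_j) = c_j\, b(P_j),\qquad j=1,\dots,n .
\]

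\textbf{Step 3: uniqueness of the quotient.} Let $(a',b')$ be any other solution. Then $ab'-a'b\in\mathscr{L}(2sP_\infty)$ and it vanishes at all $n$ points of $D$. Taking $s\le k+1$ (the smallest admissible value), we get $2s\le 2k+2 < n$ thanks to $\deg(G)\le n/2-1$. Hence $ab'=a'b$, so every nonzero solution gives the same quotient $a/b=f$. Solving this system costs $\mathcal{O}(n^2)$ per unknown after Gaussian elimination.

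\textbf{Step 4: read off the poles.} We have $(f)_\infty = G$, and $f=a/b$ with $b\in \mathscr{L}(sP_\infty-G)$. Therefore every affine point $Q_i$ with $k_i>0$ is a zero of $b$ of order at least $k_i$ that is not cancelled by $a$. Enumerate the $|\mathcal{E}(\FF_q)|-n$ points outside $D$ (together with $P_\infty$, handled through the pole orders of $a,b$ there). For each point $Q$, compute $v_Q(b)-v_Q(a)$ by expanding $a$ and $b$ in a local parameter at $Q$, for example $x-x_Q$ or $y-y_Q$ at $2$-torsion points. A positive difference gives $k_Q = v_Q(b)-v_Q(a)$. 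We stop as soon as the recovered multiplicities sum to $k$. Each valuation computation is polynomial in $n$, so this stage contributes the term $(|\mathcal{E}(\FF_q)|-n)\cdot(\text{poly in } n)$ of the stated bound. The $\deg(G)\cdot(n+n^2)$ term accounts for extracting the multiplicities, with up to $\deg(G)$ successive divisions by the local parameter, each costing $\mathcal{O}(n+n^2)$.

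\textbf{Step 5: certify the result.} Rebuild $\mathcal{C}_\mathscr{L}(D,G)$ from the recovered $G$ using Lemma~\ref{lem:R-R_single} and the bases of Algorithm~\ref{alg:McEliece_keygen}, then compare row spaces with $\mathbf{G}_{\operatorname{pub}}$.

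The certificate is conclusive because $G$ is uniquely determined by the code when $D$ is fixed. Suppose $\mathcal{C}_\mathscr{L}(D,G)=\mathcal{C}_\mathscr{L}(D,G')$. Then $G'=G+(h)$ for some $h$ with $\operatorname{ev}_D(h)=\mathbf 1$. The function $h-1$ has at most $2k<n$ poles but vanishes at all $n$ points of $D$, so $h=1$ and $G'=G$.

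\textbf{Main obstacle.} The delicate step is Step 4, where we must extract exact multiplicities and exclude cancellation between common zeros of $a$ and $b$. I would handle this first by dividing out $\gcd$-type common factors, i.e.\ replacing $(a,b)$ by a minimal-degree solution. If that proves awkward, I would repeat the procedure with a second random codeword $\mathbf{c}'$ and take the minimum of the two pole-order estimates at each candidate point, which removes accidental common zeros with high probability.
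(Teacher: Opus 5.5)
\textbf{Verdict.} Your proposal is correct in substance and takes a genuinely different route from the paper's. It does need two small repairs, given below.

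\textbf{Comparison with the paper.} The paper never reconstructs a function of $\mathscr{L}(G)$. Instead, it enumerates pairs $(Q_i,Q_j)$ of affine points outside $\operatorname{supp}(D)$ and tests whether $\operatorname{ev}_D(g_{i,j})$ lies in $\mathcal{C}$, where $g_{i,j}$ has simple poles at $Q_i,Q_j$ (Theorem~\ref{th:R-R_Arbitrary}). This test is decisive because $\operatorname{ev}_D$ is injective on $\mathscr{L}(G+Q_i+Q_j)$. It then finds each multiplicity by testing $\operatorname{ev}_D(f_s^{(Q)})\in\mathcal{C}$ for increasing $s$.

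That argument is deterministic, uses only membership tests, and needs only $\deg(G)+2<n$. However, it pays $(|\mathcal{E}(\mathbb{F}_q)|-n)^2n^2$ for the pair enumeration. It is also not designed for $|\operatorname{supp}(G)|=1$ or $P_\infty\in\operatorname{supp}(G)$; Algorithm~\ref{alg:attack_G} explicitly assumes $|\operatorname{supp}(G)|\ge 2$.

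Your Berlekamp--Welch-type rational interpolation of one generic codeword recovers support and multiplicities simultaneously and covers those cases. Its real cost is an $n\times 2(k+1)$ linear system, one evaluation of $b$ per point outside $D$, and local expansions only at the at most $k+1$ zeros of $b$. That is roughly $\mathcal{O}(nk^2+(|\mathcal{E}(\mathbb{F}_q)|-n)k)$, well inside the stated bound; your attribution of the $\deg(G)(n+n^2)$ term is artificial but harmless.

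The price is randomization and the hypothesis $\deg(G)\le n/2-1$, which is not in the statement and which you import from the surrounding section. For the randomization, check that the recovered pole divisor has degree $k$ and resample otherwise. Also, codimension one of $\mathscr{L}(G-Q_i)$ requires $\deg(G)\ge 2$, not $\deg(G)\ge 1$.

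\textbf{First repair: Step 3.} The bound $\deg(G)\le n/2-1$ gives only $2k+2\le n$, not $2k+2<n$. So your $ab'-a'b$ argument breaks at $k=n/2-1$ whenever $D\sim nP_\infty$. Compare each solution with the true $f$ instead. For any solution $(a',b')$, the function $a'-fb'$ lies in $\mathscr{L}(G+(k+1)P_\infty)$ and vanishes on $D$. Since $\deg(G+(k+1)P_\infty)=2k+1<n$, this forces $a'=fb'$, and $b'\neq 0$ for a nonzero solution.

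\textbf{Second repair: the ``main obstacle''.} This is not actually an obstacle. For every representation $f=a/b$ one has $v_Q(b)-v_Q(a)=-v_Q(f)$. Common zeros of $a$ and $b$ therefore cancel automatically, and no gcd reduction or second codeword is needed.

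\textbf{Minor points.}
\begin{itemize}
\item If $P_\infty\in\operatorname{supp}(D)$, the monomials $x^iy^j$ cannot be evaluated there. Use $\mathscr{L}(sP)$ for some $P\notin\operatorname{supp}(D)$, with the basis of Lemma~\ref{lem:R-R_single}.
\item Step~5 is superfluous: the interpolated $f$ is the actual function, so $(f)_\infty=G$ once its degree is $k$.
\item Step~5's claim that $G'=G+(h)$ with $\operatorname{ev}_D(h)=\mathbf{1}$ is stated without proof. If you want uniqueness, argue directly: equal codes together with injectivity of $\operatorname{ev}_D$ on $\mathscr{L}(G+G')$ (degree $2k<n$) give $\mathscr{L}(G)=\mathscr{L}(G')$, hence $G=G'$.
\end{itemize}
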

\begin{proof}
According to Theorem~\ref{th:R-R_Arbitrary}, if points $Q_i=(\alpha_i, \beta_i)$, $Q_j=(\alpha_j, \beta_j)$ belong to the support of $G$, then one of the functions
\[
g_{i,j}(x, y) = \frac{y + B_{i,j}(x)}{(x - \alpha_i)(x - \alpha_j)}, \quad g_{i,j}'(x, y) = \frac{1}{x - \alpha_i},
\]
where $B_{i,j}(X) = \left( \frac{\beta_j - \beta_i}{\alpha_j - \alpha_i} + a_1 \right) X + \left( (\beta_i + a_1\alpha_i + a_3) - \left( \frac{\beta_j - \beta_i}{\alpha_j - \alpha_i} + a_1 \right) \alpha_i \right)$, belongs to the Riemann--Roch space $\mathscr{L}(G)$. Thus, to recover the support of $G$, it is necessary to enumerate all possible functions $g_{i,j}$ and $g_{i,j}'$ (i.e., enumerate all possible pairs of curve points except those in the support of $D$), compute the vectors $\operatorname{ev}_D(g_{i,j})$, $\operatorname{ev}_D(g_{i,j}')$, and test whether they belong to the code $\mathcal{C}$. This requires $\mathcal{O}\!\left(\!\left(|\mathcal{E}(\mathbb{F}_q)| - n\right)^2 \cdot n^2\right)$ operations over $\mathbb{F}_q$, since the points in $\operatorname{supp}(D)$ are known by the hypothesis of the theorem. Clearly, $\operatorname{ev}_D(g_{i,j}) \in \mathcal{C}$ if and only if $\{Q_i, Q_j\} \subseteq \operatorname{supp}(G)$, which shows the uniqueness of recovering the support of $G$.

Recovering the multiplicities of the points of $G$ is possible via Lemma~\ref{lem:R-R_single}. Let $Q \in \operatorname{supp}(G)$ and let the function $f_{k_i}= \frac{y + A_{k_i}(x)}{(x - \alpha_i)^{k_i}}$ be associated with the point $Q$. Similarly to the previous case, from the definitions of the Riemann--Roch space and the algebraic geometry code, we have:
\[
\operatorname{ev}_D(f_{k_i}) \in \mathcal{C} \iff G - k_i Q \geq 0. 
\]

Thus, by evaluating the functions $f_{k_i}$ at the points of $D$ and iteratively increasing the value of $k_i$, we can recover the corresponding multiplicities for all $P_i \in \operatorname{supp}(G)$. If $\operatorname{ev}_D(f_{k_i}) \not\in \mathcal{C}$, then the divisor $G - k_i Q$ ceases to be effective, contradicting the hypothesis. The complexity of recovering the multiplicities (including the computation of $\operatorname{ev}_D(f_{k_i})$ and the membership test of the vector in the code) is $\mathcal{O}(\deg(G) \cdot (n + n^2))$. Note that the construction of the functions $f_{k_i}$ has complexity $\mathcal{O}(1)$ for any $i$, since these functions are built iteratively. Algorithm~\ref{alg:attack_G} implements the procedure described above.
\end{proof}

\begin{algorithm}[H]
\fontsize{9pt}{10pt}\selectfont
\caption{Recovering the divisor $G$}
\label{alg:attack_G}
\begin{algorithmic}[1]
\Require Elliptic curve $\mathcal{E}/\mathbb{F}_q$; generator matrix $\mathbf{G}_{\operatorname{pub}}$ of the code $\mathcal{C}_\mathscr{L}(D, G)$, where $5 \leq \deg(G) \leq \frac{n}{2}-1$; divisor $D$ such that $n = |\operatorname{supp}(D)| > 12$.
\Ensure Divisor $G$ of the code $\mathcal{C}_\mathscr{L}(D, G)$ such that $|\operatorname{supp}(G)| \geq 2$.

\State $G_{\operatorname{supp}} \gets \{\varnothing\}$.
\For{$Q_{i} = (\alpha_i, \beta_i) \in \mathcal{E}(\mathbb{F}_q)\setminus \operatorname{supp}(D)$}
\For{$Q_{j} = (\alpha_j, \beta_j) \in \mathcal{E}(\mathbb{F}_q)\setminus \{\operatorname{supp}(D) \cup \{Q_i\}\}$}
\If{$Q_i \notin G_{\operatorname{supp}}$ or $Q_j \notin G_{\operatorname{supp}}$}
\State Compute $g_{i,j}(x, y) = \frac{y + B_{i,j}(x)}{(x - \alpha_i)(x - \alpha_j)}$ and $g_{i,j}'(x, y) = \frac{1}{x - \alpha_i}$, where $B_{i,j}(X) = \left( \frac{\beta_j - \beta_i}{\alpha_j - \alpha_i} + a_1 \right) X + \left( (\beta_i + a_1\alpha_i + a_3) - \left( \frac{\beta_j - \beta_i}{\alpha_j - \alpha_i} + a_1 \right) \alpha_i \right)$.
\If{$\operatorname{ev}_D(g) \in \mathcal{C}$ or $\operatorname{ev}_D(g') \in \mathcal{C}$}
\State $G_{\operatorname{supp}} \gets G_{\operatorname{supp}} \cup \{Q_i, Q_j\}$.
\EndIf
\EndIf
\EndFor
\EndFor
\State $G \gets 0$.

\For{$Q_i = (\alpha_i, \beta_i) \in G_{\operatorname{supp}}$}
\If{$\operatorname{ev}_D\!\left(\frac{y + \beta_i + \frac{3\alpha_i^2 + a_4}{2\beta_i}(x - \alpha_i)}{(x-\alpha_i)^2}\right) \notin \mathcal{C}$}
\State $k_i \gets 1$.
\Else
\State $k_i \gets 2$.
\While{$\operatorname{ev}_D(f_{k_i + 1}) \in \mathcal{C}$} \Comment{$f_{k_i + 1} \gets \frac{Y + A_{k_i + 1}(x)}{(x - \alpha)^{k_i + 1}}$ (see Algorithm~\ref{alg:R-R_single_char_2_3})}
\State $k_i \gets k_i + 1$.
\EndWhile
\EndIf
\State $G \gets G + k_i Q_i$.
\EndFor
\State\Return $G$.
\end{algorithmic}
\end{algorithm}

\begin{remark}
The algorithms can be straightforwardly adapted to the case of an unknown elliptic curve by recovering an isomorphic representation, similarly to \cite{Sh25}. We also note that although Theorem~\ref{th:my_attack_D} considers the case where three points are known, the attack remains polynomial even without knowledge of these points. Indeed, even without exploiting the isomorphism and equivalent keys, the worst-case complexity increases by a factor of $\mathcal{O}(|\mathcal{E}(\FF_q)|^3)$ and is equal to $\mathcal{O}\!\left(|\mathcal{E}(\FF_q)|^3 \left(n^3 + k^2n^2 + |\mathcal{E}(\FF_q)| + n\right)\right)$, and is therefore only approximately a factor of $q$ higher than the complexity of the attack from \cite{Sh25}.
\end{remark}

\begin{remark}
We note that Theorem~\ref{th:my_attack_D} proves the possibility of recovering the parameters under the restriction $5 \leq \deg(G) \leq \frac{n}{2}-1$. However, the attack can be straightforwardly extended to the case $\deg(G) \geq  \frac{n}{2}+1$ by dualizing and using the techniques from \cite{CMP17}, as well as to the case $\deg(G) = \frac{n}{2}$ by considering the punctured code.
\end{remark}

\subsection{Optimized attack without ``hints''}\label{sec:attack_no_hints}

Theorem~\ref{th:my_attack_D} assumes that three points of the divisor $D$ are given. We now show that this assumption can be removed at the cost of a single enumeration over pairs of field elements, and that the resulting attack still returns a divisor lying on the given public curve $\mathcal{E}$, in contrast with \cite{Sh25}, where an isomorphic curve $\widehat{\mathcal{E}}$ is recovered. The key observation is that the group of translations of $\mathcal{E}$ acts on the set of secret keys without changing the code.

\begin{lemma}\label{lem:translation}
Let $\sigma$ be an automorphism of the curve $\mathcal{E}$ defined over $\FF_q$. Then for any divisors $D = P_1 + \ldots + P_n$ and $G \geq 0$ with $\operatorname{supp}(D) \cap \operatorname{supp}(G) = \varnothing$, it holds that
\[
\mathcal{C}_\mathscr{L}(D, G) = \mathcal{C}_\mathscr{L}\bigl(\sigma(D),\, \sigma(G)\bigr), \quad \text{where }  \sigma\Bigl(\sum_P n_P P\Bigr) = \sum_P n_P\, \sigma(P).
\]
In particular, $\bigl(\sigma(D), \sigma(G)\bigr)$ is an equivalent key for the code $\mathcal{C}_\mathscr{L}(D, G)$ on the same curve $\mathcal{E}$.
\end{lemma}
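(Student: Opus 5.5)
The plan is to show that pullback by $\sigma$ gives a bijection $\mathscr{L}(\sigma(G)) \to \mathscr{L}(G)$ that is compatible with evaluation. An automorphism $\sigma$ of $\mathcal{E}$ defined over $\FF_q$ induces an $\FF_q$-automorphism of the function field, $\sigma^*: f \mapsto f\circ\sigma$. Because $\sigma$ is an isomorphism of curves, it preserves local rings and valuations. Concretely, for every place $P$ and every $f$,
\[
v_P(f\circ\sigma) = v_{\sigma(P)}(f).
\]
Summing over all places, we get $\operatorname{div}(f\circ\sigma) = \sigma^{-1}(\operatorname{div}(f))$, where $\sigma^{-1}$ is extended linearly to divisors exactly as in the statement.

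\textbf{Step 1.} From this identity, $(f)+\sigma(G)\ge 0$ holds if and only if $\sigma^{-1}\bigl((f)\bigr)+G\ge 0$. The reason is that the linear extension of $\sigma^{-1}$ to divisors permutes coefficients among places, so it preserves effectivity. The condition $\sigma^{-1}\bigl((f)\bigr)+G\ge 0$ is exactly $(f\circ\sigma)+G\ge 0$. Therefore $\sigma^*$ restricts to an $\FF_q$-linear map $\mathscr{L}(\sigma(G))\to\mathscr{L}(G)$. Applying the same argument to $\sigma^{-1}$ gives its inverse, so the map is a bijection.

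\textbf{Step 2.} Next I check that the code $\mathcal{C}_\mathscr{L}(\sigma(D),\sigma(G))$ is well defined. The points $\sigma(P_i)$ are rational and distinct, because $\sigma$ is a bijection on $\mathcal{E}(\FF_q)$. Their supports are disjoint, $\operatorname{supp}(\sigma(D))\cap\operatorname{supp}(\sigma(G))=\sigma\bigl(\operatorname{supp}(D)\cap\operatorname{supp}(G)\bigr)=\varnothing$. Since $\sigma(G)\ge0$, each $f\in\mathscr{L}(\sigma(G))$ is regular at every $\sigma(P_i)$. For such $f$,
\[
f(\sigma(P_i)) = (f\circ\sigma)(P_i),
\]
and this gives $\operatorname{ev}_{\sigma(D)}(f)=\operatorname{ev}_D(\sigma^* f)$ coordinatewise, with the points listed in the order $\sigma(P_1),\dots,\sigma(P_n)$.

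\textbf{Step 3.} Combining Steps 1 and 2,
\[
\mathcal{C}_\mathscr{L}(\sigma(D),\sigma(G))=\operatorname{ev}_D\bigl(\sigma^*\mathscr{L}(\sigma(G))\bigr)=\operatorname{ev}_D(\mathscr{L}(G))=\mathcal{C}_\mathscr{L}(D,G),
\]
which proves the claimed equality. The curve itself is unchanged, so $(\sigma(D),\sigma(G))$ is an equivalent key on $\mathcal{E}$.

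The main technical point is the valuation identity $v_P(f\circ\sigma)=v_{\sigma(P)}(f)$, which needs $\sigma$ to be an isomorphism, not merely a rational map. Here $\sigma$ is an automorphism of the curve, for instance a translation $\tau_T$ by a rational point $T$ or $[\pm1]$. So $\sigma$ maps the local ring at $\sigma(P)$ isomorphically onto the local ring at $P$ and carries uniformizers to uniformizers, which gives the identity. One more point to note: in the paper's convention $\mathcal{E}$ is given as an affine curve together with $P_\infty$. For a translation, $\sigma$ may send $P_\infty$ to an affine point, so the argument should be run on the projective curve, where $\sigma$ is a genuine automorphism.
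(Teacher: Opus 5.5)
Your proposal is correct and follows the paper's proof essentially step for step: the valuation identity $v_P(f\circ\sigma)=v_{\sigma(P)}(f)$ giving $\operatorname{div}(f\circ\sigma)=\sigma^{-1}(\operatorname{div}(f))$, the resulting $\FF_q$-linear bijection $\mathscr{L}(\sigma(G))\to\mathscr{L}(G)$ via $f\mapsto f\circ\sigma$, and the coordinatewise compatibility of the two evaluation maps. Your remark that a translation may move $P_\infty$ to an affine point, so the argument should be run on the projective curve, is a worthwhile point that the paper leaves implicit.
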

\begin{proof}
The map $\sigma$ is an isomorphism of $\mathcal{E}$ onto itself defined over $\FF_q$, hence $v_P(f \circ \sigma) = v_{\sigma(P)}(f)$ for every $f \in \FF_q(\mathcal{E})$ and every point $P$, so that $\operatorname{div}(f \circ \sigma) = \sigma^{-1}\bigl(\operatorname{div}(f)\bigr)$. Since $\sigma^{-1}$ preserves effectiveness, for any divisor $A$ we obtain $f \in \mathscr{L}(A) \iff f \circ \sigma \in \mathscr{L}\bigl(\sigma^{-1}(A)\bigr)$, and taking $A = \sigma(G)$ shows that the composition map $f \mapsto f \circ \sigma$ is an $\FF_q$-linear bijection from $\mathscr{L}\bigl(\sigma(G)\bigr)$ onto $\mathscr{L}(G)$, whose inverse is $f \mapsto f \circ \sigma^{-1}$. Moreover, for $f \in \mathscr{L}(G)$,
\[
\bigl(f \circ \sigma^{-1}\bigr)\bigl(\sigma(P_i)\bigr) = f(P_i), \qquad i = 1, \dots, n,
\]
so the images of the two evaluation maps coincide coordinatewise. Finally, $\sigma(G)$ is effective of degree $\deg(G)$ and its support is disjoint from that of $\sigma(D)$.
\end{proof}

\begin{remark}
Lemma \ref{lem:translation} applies to the translations $t_R : P \mapsto P \oplus R$ and to the involutions $\overline{t}_S : P \mapsto S \ominus P$, $R, S \in \mathcal{E}(\FF_q)$, since $\overline{t}_S = t_S \circ [-1]$ and both $t_R$ and $[-1]$ are automorphisms of $\mathcal{E}$ over $\FF_q$.
\end{remark}

\begin{proposition}\label{prop:supp_G_fast}
Let $\mathcal{C} = \mathcal{C}_\mathscr{L}(D,G)$ with $5 \leq \deg(G) \leq \frac{n}{2}-1$, and suppose the divisor $D$ is known. Then for every $Q \in \mathcal{E}(\FF_q) \setminus \operatorname{supp}(D)$,
\[
\operatorname{ev}_D\bigl(f_2^{(Q)}\bigr) \in \mathcal{C}^{(2)}
\iff
Q \in \operatorname{supp}(G),
\]
where $f_2^{(Q)}$ is the non-constant basis element of $\mathscr{L}(2Q)$. Consequently $\operatorname{supp}(G)$ can be recovered in $\mathcal{O}\bigl(k^2n^2 + (|\mathcal{E}(\FF_q)|-n)\,n^2\bigr)$ operations in $\FF_q$.
\end{proposition}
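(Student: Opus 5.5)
The plan is to identify the Schur square $\mathcal{C}^{(2)}$ with an elliptic code $\mathcal{C}_\mathscr{L}(D,2G)$. After that, the membership test becomes a pole-order statement at $Q$.

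\textbf{Step 1: identify $\mathcal{C}^{(2)}$.} Since $\deg(G)\ge 5\ge 2g+1=3$, the multiplication map $\mathscr{L}(G)\otimes\mathscr{L}(G)\to\mathscr{L}(2G)$ is surjective. This is the standard projective normality statement for divisors of degree $\ge 2g+1$, the same fact used in \cite{CMP17} to justify Propositions~\ref{prop:V_i} and~\ref{prop:U_i}. Hence
\[
\mathcal{C}^{(2)}=\mathcal{C}_\mathscr{L}(D,2G).
\]
Moreover $\deg(2G)=2k\le n-2<n$, so $\operatorname{ev}_D$ is injective on $\mathscr{L}(2G)$ and $\dim\mathcal{C}^{(2)}=2k$.

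\textbf{Step 2: the direction ($\Leftarrow$).} If $Q\in\operatorname{supp}(G)$, then $2Q\le 2G$. Therefore $f_2^{(Q)}\in\mathscr{L}(2Q)\subseteq\mathscr{L}(2G)$, and so $\operatorname{ev}_D(f_2^{(Q)})\in\mathcal{C}^{(2)}$.

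\textbf{Step 3: the direction ($\Rightarrow$).} Suppose $\operatorname{ev}_D(f_2^{(Q)})=\operatorname{ev}_D(h)$ for some $h\in\mathscr{L}(2G)$.
\begin{itemize}
\item The difference $f_2^{(Q)}-h$ lies in $\mathscr{L}(2G+2Q)$ and vanishes at all $n$ points of $D$. These points are disjoint from $\operatorname{supp}(G)\cup\{Q\}$ because $Q\notin\operatorname{supp}(D)$. Hence $f_2^{(Q)}-h\in\mathscr{L}(2G+2Q-D)$.
\item This divisor has degree $2k+2-n\le 0$. If the degree is negative, the space is zero.
\item In the boundary case $2k+2=n$, a nonzero element exists only if $2G+2Q\sim D$. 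By Proposition~\ref{th:abel-jacobi} this pins down $[2]Q$, so at most $N_2\le 4$ values of $Q$ are affected. I expect this boundary case to be the main obstacle. I would first try to rule it out by tightening to $2k+2<n$. Failing that, I would treat these few exceptional $Q$ separately, for instance by repeating the test after puncturing one coordinate, which keeps the degree count strict.
\item Outside the boundary case, $f_2^{(Q)}=h\in\mathscr{L}(2G)$. Since $f_2^{(Q)}$ is non-constant, it has a pole of exact order $2$ at $Q$ (Lemma~\ref{lem:involution}). Then $(h)+2G\ge 0$ forces $v_Q(2G)\ge 2$, i.e. $Q\in\operatorname{supp}(G)$.
\end{itemize}

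\textbf{Step 4: complexity.}
\begin{itemize}
\item Computing the $k^2$ (or $\binom{k+1}{2}$) pairwise products of rows of $\mathbf{G}_{\operatorname{pub}}$ costs $\mathcal{O}(k^2n)$.
\item Gaussian elimination of this $k^2\times n$ matrix, to obtain a basis and a parity-check matrix $H$ of $\mathcal{C}^{(2)}$, costs $\mathcal{O}(k^2n^2)$.
\item For each of the $|\mathcal{E}(\FF_q)|-n$ candidates $Q$: $f_2^{(Q)}$ is written down in $\mathcal{O}(1)$ from Lemma~\ref{lem:R-R_single}, evaluating it on $D$ costs $\mathcal{O}(n)$, and the test $H\cdot\operatorname{ev}_D(f_2^{(Q)})^{T}=0$ costs $\mathcal{O}(n^2)$.
\end{itemize}
Summing gives $\mathcal{O}\bigl(k^2n^2+(|\mathcal{E}(\FF_q)|-n)n^2\bigr)$.
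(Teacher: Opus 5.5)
Your route is the paper's route: identify $\mathcal{C}^{(2)}=\mathcal{C}_\mathscr{L}(D,2G)$ using $\deg(G)\geq 2g+1$, then read off the double pole of $f_2^{(Q)}$ at $Q$. Your Step~3 is more careful than the paper's proof. The paper passes from $\operatorname{ev}_D(f_2^{(Q)})\in\mathcal{C}_\mathscr{L}(D,2G)$ to $f_2^{(Q)}\in\mathscr{L}(2G)$ citing only injectivity of $\operatorname{ev}_D$ on $\mathscr{L}(2G)$. Since $f_2^{(Q)}$ is not known to lie in $\mathscr{L}(2G)$, what is really needed is $\mathscr{L}(2G+2Q-D)=0$, as you observe. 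Whenever $2\deg(G)+2<n$, your argument, including the cost count, is complete.

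The boundary case $n=2k+2$ that you leave open is a genuine gap, and it cannot be closed, because the stated equivalence can fail there.

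\textbf{Why it fails.} Suppose $D-2G-2Q=(\phi)$ and $Q\notin\operatorname{supp}(G)$. Then $\ker\bigl(\operatorname{ev}_D|_{\mathscr{L}(2G+2Q)}\bigr)=\langle\phi\rangle$. Hence $\operatorname{ev}_D(f_2^{(Q)})\in\mathcal{C}^{(2)}$ holds if and only if $\phi\in\mathscr{L}(2G)\oplus\langle f_2^{(Q)}\rangle$, which is a single linear condition and can be met.

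\textbf{A concrete instance.} Take $h\in\mathscr{L}(2G)$ with exact pole divisor $2G$ such that $\phi=f_2^{(Q)}+h$ has $2k+2$ distinct rational zeros, and put $D=(\phi)_0$. To produce such $h$, prescribe $2k$ rational zeros by linear algebra in the $(2k+1)$-dimensional space $\mathscr{L}(2G)\oplus\langle f_2^{(Q)}\rangle$. The residual zero divisor is an $\FF_q$-rational divisor of degree $2$, and whenever it splits into two further rational points you obtain such a $D$. Then $\deg(G)=\frac{n}{2}-1$ is admissible, yet $\operatorname{ev}_D(f_2^{(Q)})=-\operatorname{ev}_D(h)\in\mathcal{C}^{(2)}$ while $Q\notin\operatorname{supp}(G)$.

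\textbf{Puncturing does not help.} Puncturing is a coordinate projection, so every vector that passes the test for $\mathcal{C}^{(2)}$ still passes it for the punctured code. Moreover, the relevant kernel $\mathscr{L}\bigl(2G+2Q-(D-P)\bigr)$ becomes one-dimensional for \emph{every} $Q$, not just the exceptional ones.

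\textbf{Honest repairs.} You can strengthen the hypothesis to $2\deg(G)+2<n$. Alternatively, for the algorithmic consequence, read the Schur-square test as producing a superset $S\supseteq\operatorname{supp}(G)$ with at most $N_2$ spurious points (your own count), and filter $S$ with exact tests in $\mathcal{C}$ itself. Keep $Q\in S$ if and only if $\operatorname{ev}_D(f_2^{(Q)})\in\mathcal{C}$, or $\operatorname{ev}_D(g)\in\mathcal{C}$ for some $g$ with pole divisor exactly $Q+Q'$, $Q'\in S\setminus\{Q\}$ (cf.\ Theorem~\ref{th:R-R_Arbitrary}).

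These tests are exact because their kernels $\mathscr{L}(G+2Q-D)$ and $\mathscr{L}(G+Q+Q'-D)$ have degree $k+2-n<0$. Since $\deg(G)\geq 5$, every point of multiplicity one has a partner in $\operatorname{supp}(G)\subseteq S$. The filter costs $\mathcal{O}(|S|^2n^2)=\mathcal{O}(k^2n^2)$, so the claimed complexity bound survives.
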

\begin{proof}
Since $\deg(G) \geq 2g+1$ and $\deg(2G) = 2\deg(G) \leq n-2 < n$, one has $\mathcal{C}^{(2)} = \mathcal{C}_\mathscr{L}(D, 2G)$, and the evaluation map is injective on $\mathscr{L}(2G)$. The pole divisor of $f_2^{(Q)}$ is exactly $2Q$, whence
\[
\operatorname{ev}_D\bigl(f_2^{(Q)}\bigr) \in \mathcal{C}_\mathscr{L}(D,2G)
\iff f_2^{(Q)} \in \mathscr{L}(2G)
\iff 2G \geq 2Q
\iff v_P(G) \geq 1 .
\]
The Schur square costs $\mathcal{O}(k^2n^2)$ and each membership test $\mathcal{O}(n^2)$.
\end{proof}

\begin{theorem}\label{th:my_attack_no_hints}
Let $\mathcal{C} = \mathcal{C}_\mathscr{L}(D, G)$ be an elliptic code associated with the curve $\mathcal{E}/\FF_q$, $\operatorname{char}(\FF_q) > 3$, with $5 \leq \deg(G) \leq \frac{n}{2}-1$. Then, given only the curve $\mathcal{E}$, automorphism $\sigma \in \operatorname{Aut}(\mathcal{E}/\FF_q)$ and the generator matrix $\mathbf{G}_{\operatorname{pub}}$, one can recover a divisor $\widehat{D}$ on $\mathcal{E}$ and an effective divisor $\widehat{G}$ satisfying
\[
\mathcal{C}_\mathscr{L}\bigl(\widehat{D}, \widehat{G}\bigr) = \mathcal{C}, \quad
\widehat{D} = \sigma(D), \quad \widehat{G} = \sigma(G),
\]
in $\mathcal{O}\!\left(k^2n^2 + n q^2 + \bigl(|\mathcal{E}(\FF_q)| - n\bigr)n^2\right)$ operations in $\FF_q$. 
\end{theorem}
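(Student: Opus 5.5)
The plan is to replace the three ``hints'' of Theorem~\ref{th:my_attack_D} by normalizations that are free thanks to Lemma~\ref{lem:translation}, plus one brute-force guess of the unknown affine normalization $(a,b)\in\FF_q^{*}\times\FF_q$ of the basis of $\mathcal{U}_2^{(P_1)}$. First, fix a public point $O\in\mathcal{E}(\FF_q)\setminus\{P_\infty\}$, preferably with $O\notin\mathcal{E}[2]$. The translation $t_{O\ominus P_1}$ is an $\FF_q$-automorphism and, by Lemma~\ref{lem:translation}, leaves the code unchanged. We may therefore assume without loss of generality that the first coordinate of $D$ is the point $O$; the first coordinate is a fixed index of $\mathbf{G}_{\operatorname{pub}}$, and the punctured and shortened codes at this index are computable. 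As in the proof of Theorem~\ref{th:my_attack_D}, Propositions~\ref{prop:V_i} and~\ref{prop:U_i} give $\mathcal{U}_2^{(O)}=\mathcal{C}_\mathscr{L}(D-O,2O)$ in $\mathcal{O}(k^2n^2)$ operations. We then pick a nonconstant vector $\mathbf{g}\in\mathcal{U}_2^{(O)}$. This vector equals $\operatorname{ev}_{D-O}(a f_2^{(O)}+b)$ for some unknown $a\neq 0$ and $b$, where $f_2^{(O)}$ is the explicit function of Lemma~\ref{lem:R-R_single}.

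Second, we enumerate all $(a,b)$. We precompute, once, a hash table of $f_2^{(O)}$ on $\mathcal{E}(\FF_q)$ in $\mathcal{O}(|\mathcal{E}(\FF_q)|)$ operations. For each guess we look up every value $a^{-1}(g_i-b)$ in the table. This yields, by Lemma~\ref{lem:involution}, a fibre $\{R_i,\tau_O(R_i)\}$ for every position, or an immediate rejection as soon as some value has an empty fibre. Each guess costs $\mathcal{O}(n)$, which gives the $nq^2$ term. Two further normalizations remain free. The involution $\tau_O=\overline{t}_{[2]O}$ is an automorphism fixing $O$, so by Lemma~\ref{lem:translation} we may fix the representative at position $2$ arbitrarily; this defines $\widehat{P}_2$ and absorbs $\sigma$ into $\overline{t}_{[2]O}\circ t_{O\ominus P_1}$ or $t_{O\ominus P_1}$. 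The correct guess then behaves exactly like the hinted situation of Theorem~\ref{th:my_attack_D} with the known points $O$ and $\widehat{P}_2$.

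Third, to fix the remaining per-position ambiguity we compute $\mathcal{U}_2^{(\widehat{P}_2)}$ and normalize its basis. This needs the values at two known points, namely $O$ and a point at position $3$. For position $3$ we try both elements of its fibre, which costs only a constant factor. We then intersect fibres position by position as in Theorem~\ref{th:my_attack_D}; this separates the two candidates because $[2]O\neq[2]\widehat{P}_2$ outside an event of probability $\mathcal{O}(|\mathcal{E}(\FF_q)|^{-1})$. If that event occurs, we replace position $2$ by another index.

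Finally, each surviving candidate $\widehat{D}$ is verified. We recover $\operatorname{supp}(\widehat{G})$ by Proposition~\ref{prop:supp_G_fast} and the multiplicities by the iterative test in the proof of Theorem~\ref{th:my_attack_G}. Together these give the term $(|\mathcal{E}(\FF_q)|-n)n^2$. We accept $\widehat{D}$ only when $\mathcal{C}_\mathscr{L}(\widehat{D},\widehat{G})=\mathcal{C}$.

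The main obstacle is the complexity bookkeeping for wrong guesses $(a,b)$. Doing the full fibre intersection and verification for every guess would cost $q^2n^2$, not $nq^2$. I would first argue that a wrong guess survives the cheap $\mathcal{O}(n)$ filter only rarely, so that on average only $\mathcal{O}(1)$ guesses reach the expensive stage. The heuristic is that a random value lies in the image of the degree-$2$ map $f_2^{(O)}$ with probability about $1/2$, so all $n-1$ coordinates pass with probability roughly $2^{-(n-1)}$. Failing a rigorous version of this, I would note that the affine maps $v\mapsto av+b$ that keep the image set $f_2^{(O)}(\mathcal{E}(\FF_q))$ invariant on $n$ random points form a tiny set. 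Either way, the $q^2$ enumeration costs $\mathcal{O}(nq^2)$ on average, and the theorem follows.
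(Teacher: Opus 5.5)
Your overall architecture is the paper's. A translation (Lemma~\ref{lem:translation}) places a public point $O=R_0$ at position~$1$. The affine normalization $(a,b)$ of $\mathcal{U}_2^{(P_1)}$ is brute-forced against a hash table of $f_2^{(O)}$. The two-fold ambiguity is removed with a second code $\mathcal{U}_2^{(P_j)}$, normalized through position~$1$ and one guessed point. Finally, $\widehat{G}$ is obtained from Proposition~\ref{prop:supp_G_fast} plus the multiplicity search.

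Your extra normalization by $\tau_O$ is correct and slightly sharper than the paper. The map $\tau_O$ fixes $O$ and preserves every fibre of $f_2^{(O)}$, so $(a^*,b^*)$ and all the sets $\widetilde{D}_i^{(P_1)}$ are unchanged. The paper's loop over both $W_j\in\widetilde{D}_j^{(P_1)}$ therefore becomes unnecessary. One detail: your $2\times2$ normalization through $O$ and $W_3$ is solvable only if $W_3\neq\tau_{\widehat{P}_2}(O)$; if this fails for the true point, you must use another position.

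The genuine gap is the complexity bound, which is exactly the step you flag. The paper's proof gives $\mathcal{O}(k^2n^2+nq^2+(|\mathcal{E}(\FF_q)|-n)n^2)$ unconditionally. Your argument yields it only under the independence heuristic that the paper isolates in Remark~\ref{rem:expected_complexity}, precisely because it does not follow from anything proved.

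Even heuristically, your argument falls short in the theorem's range. About $q^2 2^{-(n-1)}$ false guesses survive, which is $\mathcal{O}(1)$ only for $n\gtrsim 2\log_2 q$, while the hypotheses allow $n=12$ with arbitrary $q$. The fallback about affine maps preserving $V$ does not apply either, since a survivor only has to send the $n-1$ values $g_i$ into $V$.

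What you can actually prove is $M\le|V|^2=\Theta(q^2)$ survivors, from the injectivity of $(a,b)\mapsto(ag_{i_0}+b,\,ag_{i_1}+b)$ when $g_{i_0}\neq g_{i_1}$. Nothing you prove bounds the number of candidates reaching verification any better. Each such candidate goes through your support search at cost $\Theta\bigl((|\mathcal{E}(\FF_q)|-n)n^2\bigr)$. If the fibres of $f_2^{(\widehat{P}_2)}$ are tabulated as in Theorem~\ref{th:my_attack_D}, add another $\Theta(q)$ per survivor. The bound your argument really supports is therefore of order $q^3n^2$.

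The missing idea is to spend only $\mathcal{O}(n)$ per survivor and to recover $\widehat{G}$ once.
\begin{itemize}
\item The code $\mathcal{U}_2^{(P_j)}$ depends only on the index $j$, not on the guess, so compute it once.
\item For each survivor and each of the $\mathcal{O}(1)$ guesses of the auxiliary point, solve the $2\times 2$ system.
\item Do not build the fibres of $f_2^{(W_j)}$. Instead, evaluate $f_2^{(W_j)}$ at the two elements of each $\widetilde{D}_i^{(P_1)}$ and compare with $a'g^{(P_j)}_i+b'$; this is the test~\eqref{eq:cross_check}.
\item Accept the first survivor producing pairwise distinct singletons, and run Proposition~\ref{prop:supp_G_fast} with the multiplicity search a single time.
\end{itemize}
This gives $\mathcal{O}(nq^2)$ in the worst case. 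Note that the paper performs no per-candidate verification and only argues that the correct pair yields such a configuration. If you want to keep a verification step to exclude false survivors, it must also fit within $\mathcal{O}(n)$ per survivor.
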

\begin{proof}
Choose an arbitrary point $R_0 = (\alpha, \beta) \in \mathcal{E}(\FF_q) \setminus \{P_\infty\}$ and set $R = R_0 \ominus P_1$. By Lemma~\ref{lem:translation} applied to $\sigma = t_R$, the pair $\bigl(t_R(D), t_R(G)\bigr)$ generates the same code $\mathcal{C}$, while the first point of $t_R(D)$ is $t_R(P_1) = R_0$. Thus, after replacing $(D,G)$ by $\bigl(t_R(D),t_R(G)\bigr)$, we may assume that
\[
P_1 = R_0,
\]
that is, the first point of the divisor is known. The adversary does not need to know $R$: the matrix $\mathbf{G}_{\operatorname{pub}}$ is also a generator matrix of the translated key, and any divisor recovered under this normalization gives an equivalent key by Lemma~\ref{lem:translation}.
Since $5 \leq \deg(G) \leq \frac{n}{2}-1$, the codes
\[
\mathcal{U}_2^{(P_1)} = \mathcal{C}_\mathscr{L}(D - P_1, 2P_1), \qquad
\mathcal{U}_2^{(P_j)} = \mathcal{C}_\mathscr{L}(D - P_j, 2P_j), \quad j = 2,\dots,5,
\]
can be computed from $\mathbf{G}_{\operatorname{pub}}$ by Propositions~\ref{prop:V_i} and~\ref{prop:U_i}, exactly as in the proof of Theorem~\ref{th:my_attack_D}. Let $\{\mathbf{1},\mathbf{g}\}$, $\mathbf{g} = (g_2,\dots,g_n)$, be a basis of $\mathcal{U}_2^{(P_1)}$ with $\mathbf{g} \notin \langle\mathbf{1}\rangle$, and let $f_2 = f_2^{(R_0)}$ be the function defined in~\eqref{eq:U2_basis}, which is explicitly known since $R_0$ and the equation of $\mathcal{E}$ are known. Because $\{1,f_2\}$ is a basis of $\mathscr{L}(2R_0)$ and the evaluation map is injective on $\mathscr{L}(2R_0)$ for $n-1>2$, there is a unique pair $(a^*,b^*) \in \FF_q^* \times \FF_q$ such that
\begin{equation}\label{eq:true_normalization}
a^* g_i + b^* = f_2(P_i), \qquad i = 2,\dots,n .
\end{equation}
We find this pair by exhaustive search over $\FF_q^* \times \FF_q$. To test a candidate $(a,b)$, consider the value set
\[
V = f_2\bigl(\mathcal{E}(\FF_q) \setminus \{R_0\}\bigr) \subseteq \FF_q .
\]
By Lemma~\ref{lem:involution}, every fibre of $f_2$ over an element of $V$ contains either one or two rational points, and by Remark~\ref{rem:ramification} the cardinality of fibre equals 1 exactly for the $N_2-1$ points of $\bigl(R_0 \oplus \mathcal{E}[2](\FF_q)\bigr) \setminus \{R_0\}$, where $N_2 = |\mathcal{E}[2](\FF_q)|$. Hence
\begin{equation}\label{eq:value_set_size}
|V| = \frac{|\mathcal{E}(\FF_q)| - N_2}{2} + (N_2 - 1)
= \frac{|\mathcal{E}(\FF_q)| + N_2}{2} - 1.
\end{equation}
The set $V$ together with the corresponding fibres is stored in a hash table $T$ with keys $f_2(P)$ and values $P$. By~\eqref{eq:true_normalization}, the correct pair satisfies $a^*g_i+b^* \in V$ for every $i=2,\dots,n$. We therefore enumerate all pairs $(a,b) \in \FF_q^* \times \FF_q$ and test the conditions $a g_i+b\in V$ successively. Let $i_0\neq i_1$ be two positions with $g_{i_0}\neq g_{i_1}$, which exist because $\mathbf{g}\notin\langle\mathbf{1}\rangle$, and let $i_2,i_3,\dots$ be an arbitrary enumeration of the remaining positions. Denote by $t(a,b)\leq n-1$ the number of tests performed for $(a,b)$ and by
\[
M_m = \#\bigl\{(a,b) \in \FF_q^* \times \FF_q \ :\ a g_{i_r} + b \in V, \ r = 0,\dots,m-1\bigr\}
\]
the number of pairs passing the first $m$ tests. Since the map $(a,b)\mapsto(a g_{i_0}+b,a g_{i_1}+b)$ is injective, we have
\begin{equation}\label{eq:survivors_bound}
M_1 \leq |V|\,q, \quad M_m \leq M_2 \leq |V|^2 \quad (\text{for } m\geq2).
\end{equation}
Consequently, from~\eqref{eq:value_set_size} and the Hasse bound the total cost of the enumeration is
\[
\sum_{(a,b) \in \FF_q^* \times \FF_q} \bigl(1+t(a,b)\bigr) = q(q-1)+\sum_{m\geq1}M_m \leq q^2+|V|\,q+(n-2)|V|^2 = \mathcal{O}(nq^2),
\]
In particular, the number $M$ of pairs passing all $n-1$ tests satisfies $M\leq|V|^2=\mathcal{O}(q^2)$. For each surviving pair $(a,b)$, put $\lambda_i = a g_i+b$ and $\widetilde{D}_i^{(P_1)}=T[\lambda_i]$ for $i=2,\dots,n$. For the correct pair $(a^*,b^*)$, Lemma~\ref{lem:involution} gives
\[
\widetilde{D}_i^{(P_1)} = \{P_i,\tau_{R_0}(P_i)\}, \qquad i=2,\dots,n,
\]
which are exactly the candidate sets obtained in the proof of Theorem~\ref{th:my_attack_D} when $P_1=R_0$. It remains to resolve this ambiguity. For this purpose, one code $\mathcal{U}_2^{(P_j)}$ is sufficient. Choose $j\geq2$ such that $[2]W\neq[2]R_0$ for $W\in\widetilde{D}_j^{(P_1)}$. This condition is independent of the choice of $W$, since
\[
[2]\tau_{R_0}(P_j)=[4]R_0\ominus[2]P_j=[2]R_0 \iff [2]P_j=[2]R_0,
\]
and it can therefore be checked in $\mathcal{O}(1)$ operations. Such an index exists among $j\in\{2,\dots,N_2+1\}\subseteq\{2,\dots,5\}$, because at most $N_2-1\leq3$ points of $D$ lie in the coset $R_0\oplus\mathcal{E}[2](\FF_q)$.
Fix any $l\neq j$, $l\geq2$. For each of the at most four pairs $(W_j,W_l)\in\widetilde{D}_j^{(P_1)}\times\widetilde{D}_l^{(P_1)}$, construct the function $f_2^{(W_j)}$ by~\eqref{eq:U2_basis} and normalize a basis vector $\mathbf{g}^{(P_j)}$ of $\mathcal{U}_2^{(P_j)}$ by solving
\begin{equation}\label{eq:f_2_system_no_hints}
\begin{cases}
a^\prime g^{(P_j)}_1+b^\prime=f_2^{(W_j)}(R_0),\\
a^\prime g^{(P_j)}_l+b^\prime=f_2^{(W_j)}(W_l),
\end{cases}
\end{equation}
which is uniquely solvable whenever $W_l\neq\tau_{W_j}(R_0)$. The first equation is available because position $1$ is known to contain $R_0$. We do not explicitly construct the candidate sets $\widetilde{D}_i^{(P_j)}=\{P_i,\tau_{W_j}(P_i)\}$. Instead, for every $i \not\in \{1,j\}$ and every $W\in\widetilde{D}_i^{(P_1)}$, we test whether
\begin{equation}\label{eq:cross_check}
f_2^{(W_j)}(W)=a^\prime g^{(P_j)}_i+b^\prime .
\end{equation}
By Lemma~\ref{lem:involution}, the points satisfying this equality are exactly the elements of $\widetilde{D}_i^{(P_1)}\cap\widetilde{D}_i^{(P_j)}$. Thus, this step requires $\mathcal{O}(n)$ operations for each surviving pair.
For the correct choice of $W_j=P_j$, the condition $[2]P_j\neq[2]R_0$ implies
\[
\widetilde{D}_i^{(P_1)}\cap\widetilde{D}_i^{(P_j)}=\{P_i\}, \quad i\not\in\{1,j\},
\]
so that all the intersections are singletons and the resulting points are pairwise distinct. In this case we obtain the divisor $\widehat{D}=R_0+\sum_{i\neq1}P_i$. 
The correct pair $(a^*,b^*)$ passes all membership tests and yields the correct divisor, so the procedure does not fail. We now consider the complexity of recovering the divisor $\widehat{D}$. Constructing the hash table $T$ requires $\mathcal{O}\bigl(|\mathcal{E}(\FF_q)|\bigr)$ operations. Computing the five codes $\mathcal{U}_2^{(P_1)},\mathcal{U}_2^{(P_2)},\dots,\mathcal{U}_2^{(P_5)}$ requires $\mathcal{O}(k^2n^2)$ operations and is performed only once. The enumeration of $(a,b)$ costs $\mathcal{O}(nq^2)$, while processing all surviving pairs also costs $\mathcal{O}(nq^2)$, since there are $M=\mathcal{O}(q^2)$ such pairs and only a constant number of combinations $(W_j,W_l)$ is considered for each pair. Since $|\mathcal{E}(\FF_q)|=\Theta(q)$ by the Hasse bound, the complexity of this part is
\[
\mathcal{O}\!\left(k^2n^2+nq^2+|\mathcal{E}(\FF_q)|\right)=\mathcal{O}\!\left(k^2n^2+nq^2\right).
\]
It remains to recover the second divisor. Since $\widehat{D}$ is now known and $\widehat{G} = \sigma(G)$ is effective of degree $k = \deg(G)$ with $5 \leq k \leq \frac{n}{2}-1$, Proposition~\ref{prop:supp_G_fast} applies to the pair $\bigl(\widehat{D}, \widehat{G}\bigr)$: computing the Schur square $\mathcal{C}^{(2)} = \mathcal{C}_\mathscr{L}\bigl(\widehat{D}, 2\widehat{G}\bigr)$ and testing, for every $Q \in \mathcal{E}(\FF_q) \setminus \operatorname{supp}(\widehat{D})$, whether
\[
\operatorname{ev}_{\widehat{D}}\bigl(f_2^{(Q)}\bigr) \in \mathcal{C}^{(2)},
\]
we obtain $\operatorname{supp}(\widehat{G})$ in $\mathcal{O}\bigl(k^2n^2 + (|\mathcal{E}(\FF_q)|-n)\,n^2\bigr)$ operations. The multiplicities are then determined as in the proof of Theorem~\ref{th:my_attack_G}: for each $Q \in \operatorname{supp}(\widehat{G})$ we increase $s$ as long as $\operatorname{ev}_{\widehat{D}}\bigl(f_s^{(Q)}\bigr) \in \mathcal{C}$, which stops exactly at $s = v_Q(\widehat{G})$, because the pole divisor of $f_s^{(Q)}$ equals $sQ$ and hence $\operatorname{ev}_{\widehat{D}}\bigl(f_s^{(Q)}\bigr) \in \mathcal{C}$ if and only if $\widehat{G} - sQ \geq 0$. Since the functions $f_s^{(Q)}$ are constructed iteratively (Algorithm~\ref{alg:R-R_single_char_2_3}), this costs $\mathcal{O}\bigl(\deg(\widehat{G})\,n^2\bigr) = \mathcal{O}(k n^2)$ operations, which is absorbed by $\mathcal{O}(k^2n^2)$. Note that no assumption on the shape of $G$ is needed here: $\sigma$ being an automorphism, one has $\widehat{G} = \sum_i k_i\,\sigma(Q_i)$ whenever $G = \sum_i k_i Q_i$, so $\widehat{G}$ has the same support size and the same multiplicities as $G$. By Lemma~\ref{lem:translation}, the pair $\bigl(\widehat{D}, \widehat{G}\bigr)$ satisfies $\mathcal{C}_\mathscr{L}\bigl(\widehat{D},\widehat{G}\bigr) = \mathcal{C}$ and is therefore an equivalent key. The total complexity of recovering $\widehat{D}$ and $\widehat{G}$ is
\[
\mathcal{O}\!\left(k^2n^2 + nq^2 + |\mathcal{E}(\FF_q)| + \bigl(|\mathcal{E}(\FF_q)|-n\bigr)n^2 + kn^2\right) = \mathcal{O}\!\left(k^2n^2 + nq^2 + \bigl(|\mathcal{E}(\FF_q)|-n\bigr)n^2\right).
\]
\end{proof}

\begin{remark}\label{rem:expected_complexity}
The bound $M_m \leq |V|^2$ of~\eqref{eq:survivors_bound}, used in the proof of Theorem~\ref{th:my_attack_no_hints}, is a worst-case estimate. It does not take into account the reduction in the number of candidates caused by the successive tests. To estimate the average-case complexity, we make the following heuristic independence assumption: for every incorrect pair $(a,b) \neq (a^*,b^*)$, the values $a g_i + b$, $i = 2,\dots,n$, behave as independent uniformly distributed elements of $\FF_q$. 

Under this assumption an incorrect pair survives each additional test with probability $\rho = |V|/q$, where, by~\eqref{eq:value_set_size}, the Hasse bound and $N_2 \leq 4$,
\begin{equation}\label{eq:rho_bound}
\rho = \frac{|V|}{q} \leq \frac{1}{2} + \frac{1}{\sqrt q} + \frac{3}{2q},
\end{equation}
so that $M_m \approx \rho^{\,m} q^2$ and the expected number of tests performed for an incorrect pair equals $(1-\rho)^{-1} \leq 3$ for $q \geq 64$. Consequently, the expected cost of the enumeration is $\mathcal{O}(q^2)$ instead of $\mathcal{O}(nq^2)$, and the expected number of surviving pairs is $1 + q^2\rho^{\,n-1}$. Since $q^2 \rho^{\,n-1} \leq 1$ as soon as $n \geq 1 + 2\log_{1/\rho} q$, which holds for all practical parameter sets, only the correct pair survives and has to be processed in the second step. The expected complexity of the attack is therefore
\[
\mathcal{O}\!\left(k^2n^2 + q^2 + \bigl(|\mathcal{E}(\FF_q)| - n\bigr)n^2\right).
\]

This assumption is not a consequence of the previous statements: for a fixed pair $(a,b)$ these values are completely determined by $(a,b)$ and by the chosen vector $\mathbf{g}$, and are therefore neither random nor independent. Nevertheless, the resulting probability and the corresponding estimate agree with our practical experiments, in which we obtained a unique pair $(a^*, b^*)$ passing all the tests for almost every code length $n$.
\end{remark}

\begin{algorithm}[H]
	\fontsize{9pt}{11pt}\selectfont
	\caption{Recovering an equivalent key $(\widehat{D}, \widehat{G})$ without known points}
	\label{alg:attack_no_hints}
	\begin{algorithmic}[1]
		\Require Elliptic curve $\mathcal{E}/\FF_q$ ($\operatorname{char}(\FF_q) > 3$); generator matrix $\mathbf{G}_{\operatorname{pub}}$ of the code $\mathcal{C}_\mathscr{L}(D,G)$, where $5 \leq \deg(G) \leq \frac{n}{2}-1$.
		\Ensure Divisors $\widehat{D} = \sigma(D)$ and $\widehat{G} = \sigma(G)$ for some $\sigma \in \operatorname{Aut}(\mathcal{E}/\FF_q)$, or $\perp$.
		\State Choose an arbitrary point $R_0 = (\alpha,\beta) \in \mathcal{E}(\FF_q) \setminus \{P_\infty\}$ and set $P_1 \gets R_0$.
		\If{$\beta \neq 0$} \Comment{$R_0 \notin \mathcal{E}[2]$}
		\State $\gamma \gets \frac{3\alpha^2 + a_4}{2\beta}$;\quad $f_2^{(P_1)} \gets \frac{y + \beta + \gamma(x-\alpha)}{(x-\alpha)^2}$
		\Else
		\State $f_2^{(P_1)} \gets \frac{1}{x-\alpha}$ \Comment{$R_0 \in \mathcal{E}[2]$}
		\EndIf
		\State Compute the value set $V \gets f_2^{(P_1)}\bigl(\mathcal{E}(\FF_q)\setminus\{R_0\}\bigr)$ and, for every $\lambda \in V$, the fibre $T[\lambda] \gets \bigl\{P \in \mathcal{E}(\FF_q) : f_2^{(P_1)}(P) = \lambda\bigr\}$.
		\State Compute the generator matrices of $\mathcal{U}_2^{(P_j)} = \mathcal{C}_\mathscr{L}(D - P_j, 2P_j)$, $j = 1,\dots,5$, using $\mathbf{G}_{\operatorname{pub}}$, Propositions~\ref{prop:V_i} and~\ref{prop:U_i}.
        
		\State Find $\mathbf{g} = (g_2,\dots,g_n) \in \mathcal{U}_2^{(P_1)} \setminus \langle\mathbf{1}\rangle$
and $\mathbf{g}^{(P_j)} \in \mathcal{U}_2^{(P_j)} \setminus \langle\mathbf{1}\rangle$ for
$j = 2,\dots,5$, the coordinates of $\mathbf{g}^{(P_j)}$ being indexed by $\{1,\dots,n\}\setminus\{j\}$.
		\For{$(a,b) \in \FF_q^* \times \FF_q$}
		\If{$a g_i + b \in V$ for all $i = 2,\dots,n$} \Comment{two positions $i_0, i_1$ with $g_{i_0} \neq g_{i_1}$ are tested first}
		\State $\widetilde{D}_i^{(P_1)} \gets T[a g_i + b]$ for $i = 2,\dots,n$.
		\State Find $j \in \{2,\dots,5\}$ such that $[2]W \neq [2]R_0$ for $W \in \widetilde{D}_j^{(P_1)}$, and fix any $l \geq 2$, $l \neq j$.
		\For{$W_j \in \widetilde{D}_j^{(P_1)}$ and $W_l \in \widetilde{D}_l^{(P_1)}$ such that $W_l \neq \tau_{W_j}(R_0)$}
		\State Construct $f_2^{(W_j)}$ as in steps~2--5 with $W_j$ in place of $R_0$ and solve~\eqref{eq:f_2_system_no_hints} for $(a^\prime,b^\prime)$.
		\State $S_i \gets \bigl\{W \in \widetilde{D}_i^{(P_1)} \mid f_2^{(W_j)}(W) = a^\prime g^{(P_j)}_i + b^\prime\bigr\}$ for $i \neq 1,j$.
		\If{all the sets $S_i$ are singletons and their elements are pairwise distinct}
		\State $\widehat{D} \gets R_0 + W_j + \sum_{i \neq 1,j} S_i$ and recover $\widehat{G}$ by Theorem~\ref{th:my_attack_G} and Proposition~\ref{prop:supp_G_fast}.
		\State \Return $\bigl(\widehat{D}, \widehat{G}\bigr)$.
		\EndIf
		\EndFor
		\EndIf
		\EndFor
		\State \Return $\perp$
	\end{algorithmic}
\end{algorithm}

\begin{remark}\label{rem:number_of_codes}
The proof of Theorem~\ref{th:my_attack_no_hints} and Algorithm~\ref{alg:attack_no_hints} use the five codes $\mathcal{U}_2^{(P_j)}$, although in practice a single additional code $\mathcal{U}_2^{(P_2)}$ is almost always sufficient. Indeed, the code $\mathcal{U}_2^{(P_j)}$ is useful only when $[2]P_j \neq [2]R_0$, that is, when $P_j \notin R_0 \oplus \mathcal{E}[2](\FF_q)$. Since this coset contains $N_2 \leq 4$ points, one of which is $R_0$ itself, at most $N_2 - 1 \leq 3$ positions of the divisor $D$ are unusable, and therefore a suitable index always exists among $j \in \{2,\dots,5\}$. Precomputing these four codes thus guarantees that the divisor is recovered with probability $1$.

On the other hand, for a uniformly random divisor $D$ any position is unusable with probability at most
\[
\frac{N_2 - 1}{|\mathcal{E}(\FF_q)| - 1} \leq \frac{3}{|\mathcal{E}(\FF_q)| - 1},
\]
so with probability at least $1 - \frac{3}{|\mathcal{E}(\FF_q)|-1}$ the index $j = 2$ is admissible and only the two codes $\mathcal{U}_2^{(P_1)}$ and $\mathcal{U}_2^{(P_2)}$ have to be computed.
\end{remark}

\bibliography{biblio}

\begin{appendices}

\section{Auxiliary results for the attack from Section~\ref{sec:our_attack}}\label{sec:appendix_1}
\subsection{Riemann-Roch bases for arbitrary effective divisors from \cite{KM26}}

\begin{lemma}[{\cite[Lemma 3.1.1]{KM26}}]\label{lem:R-R_single}
Let $\mathcal{E}/\FF_{q}$ be an elliptic curve, and let $P = (\alpha, \beta) \in \mathcal{E}(\FF_{q})\backslash P_\infty$. The basis $\mathscr{L}_b$ of the Riemann – Roch space associated with the divisor $G = kP$, where $k \in \NN_{\geq 2}$, is defined as follows: 

Case 1: If $P \not\in \mathcal{E}[2]$ (i.e., $-P \neq P$), then the basis is:
\[
\mathscr{L}_b = \left\{ 1, f_2, f_3, \dots, f_k \right\}, \qquad 
f_s(x,y) = \frac{y + A_s(x)}{(x - \alpha)^s},
\]
where $A_s(x) \in \FF_{q}(\mathcal{E})$ is a function of degree $\deg(A_s) \leq s-1$, satisfying:
\begin{enumerate}
    \item $v_{P^\prime}(y + A_s(x)) \geq s$ for $P^\prime = -P$;
    \item $v_Q(y + A_s(x)) \geq 0$ for any $Q \neq P^\prime$.
\end{enumerate}

Case 2: If $P \in \mathcal{E}[2]$ (i.e., $-P = P$), then the basis is:
\[
\mathscr{L}_b = \left\{ 1 \right\} \cup \left\{ \frac{1}{(x-\alpha)^s} \;\middle|\; 1 \leq s \leq \left\lfloor \frac{k}{2} \right\rfloor \right\} \cup \left\{ \frac{y-\beta}{(x-\alpha)^s} \;\middle|\; 2 \leq s \leq \left\lfloor \frac{k+1}{2} \right\rfloor \right\}.
\]
\end{lemma}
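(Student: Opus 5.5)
The statement to prove is Lemma~\ref{lem:R-R_single}: the explicit Riemann--Roch basis of $\mathscr{L}(kP)$ on $\mathcal{E}$. The plan is a dimension count plus pole-order bookkeeping. Since $g=1$ and $\deg(kP)=k\geq 2>2g-2$, Riemann--Roch gives $\ell(kP)=k$. It therefore suffices, in each case, to exhibit $k$ functions in $\mathscr{L}(kP)$ with pairwise distinct pole orders at $P$, lying in $\{0,2,3,\dots,k\}$. Functions with distinct pole orders at a single point are linearly independent, because a nontrivial combination has pole order equal to the maximum among its terms with nonzero coefficient. Both cases are then settled by computing divisors of $x-\alpha$ and of the numerators.

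\textbf{Case 2, $P\in\mathcal{E}[2]$.} Here $\beta=0$ in short Weierstrass form, and $x-\alpha$ has divisor $2P-2P_\infty$. The point $P$ is a simple zero of $y$, so $v_P(y)=1$, and $y-\beta=y$ has divisor $P+P'+P''-3P_\infty$, where $P',P''$ are the other two roots. Hence $1/(x-\alpha)^s$ has pole divisor exactly $2sP$ and lies in $\mathscr{L}(kP)$ iff $2s\leq k$. Likewise $y/(x-\alpha)^s$ has pole order $2s-1$ at $P$, no other poles once $2s\geq 3$, and lies in $\mathscr{L}(kP)$ iff $2s-1\leq k$. Together with $1$, the pole orders $0$, $2s$ for $1\leq s\leq\lfloor k/2\rfloor$, and $2s-1$ for $2\leq s\leq\lfloor (k+1)/2\rfloor$ cover $\{0,2,\dots,k\}$ exactly once. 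That gives $k$ independent functions, so they form a basis.

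\textbf{Case 1, $P\notin\mathcal{E}[2]$.} Now $(x-\alpha)=P+(-P)-2P_\infty$. The first task is to construct $A_s$ with the two stated properties. Let $t=x-\alpha$; it is a uniformizer at $-P$, since $-P$ is not $2$-torsion and the tangent there is not vertical. Expand $y=\beta'+c_1t+c_2t^2+\cdots$ in the completed local ring at $-P$, and set $A_s(x)=-(\beta'+c_1t+\dots+c_{s-1}t^{s-1})$, a polynomial in $x$ of degree at most $s-1$. Then $v_{-P}(y+A_s)\geq s$, and $y+A_s$ is regular at every affine point, which gives properties 1 and 2. Moreover $v_P(y+A_s)=0$: at $P$ we have $y=\beta\neq\beta'$, while $A_s(\alpha)=-\beta'$, so $(y+A_s)(P)=\beta-\beta'=2\beta\neq0$.

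It follows that $f_s=(y+A_s)/t^s$ has no pole at $-P$ and a pole of order exactly $s$ at $P$. At $P_\infty$ we have $v_{P_\infty}(y+A_s)\geq\min(-3,-2(s-1))$, and dividing by $t^s$ raises this valuation by $2s$, giving $v_{P_\infty}(f_s)\geq 0$ for $s\geq 2$. So $f_s\in\mathscr{L}(sP)\setminus\mathscr{L}((s-1)P)$, and the family $\{1,f_2,\dots,f_k\}$ has pole orders $0,2,\dots,k$, hence is a basis.

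The main obstacle is the local expansion step in Case 1. One must justify that $x-\alpha$ really is a uniformizer at $-P$ and that truncating the power series of $y$ yields a polynomial whose lowest-order behaviour is correct. The cleanest route is to solve the curve equation successively for $c_1=\gamma$ (the tangent slope, matching \eqref{eq:U2_basis}), then $c_2$, and so on. At each step one needs the linear coefficient $2\beta'\neq 0$, which is exactly the non-$2$-torsion hypothesis.
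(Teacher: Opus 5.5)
Your proof is correct and takes essentially the route the paper relies on. The paper gives no proof of its own and cites \cite{KM26}, but its Algorithm~\ref{alg:R-R_single_char_2_3} builds $A_s$ exactly as your truncated expansion of $y$ in the uniformizer $t=x-\alpha$ at $-P$, solved coefficient by coefficient with denominator $2\beta'\neq 0$, and your Riemann--Roch count $\ell(kP)=k$ together with distinct pole orders at $P$ completes that construction into a proof. You also rightly read condition~2 as concerning only affine $Q$, since $y+A_s$ always has a pole at $P_\infty$, and your identity $A_s(\alpha)=-\beta'$ is forced by condition~1, so the argument covers every admissible $A_s$ (which is in fact unique), not only the one you construct.
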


\begin{theorem}[{\cite[Theorem 3.1.2]{KM26}}]\label{th:R-R_Arbitrary}
Let $\mathcal{E}/\mathbb{F}_{q}$ be an elliptic curve in general Weierstrass form:
\[
\mathcal{E}: y^2 + a_1xy + a_3y = x^3 + a_2x^2 + a_4x + a_6,
\]
and let $P_1 = (\alpha_1, \beta_1), \dots, P_z = (\alpha_z, \beta_z)$ be distinct rational points in $\mathcal{E}(\mathbb{F}_{q}) \setminus \{P_\infty\}$. For a divisor $G = \sum_{i=1}^z k_i P_i$ with $k_i \in \mathbb{N}_{\geq 1}$, the Riemann--Roch space $\mathscr{L}(G)$ admits a basis consisting of the constant function together with functions of the following two forms.

\noindent 1. Single-point basis functions, with pole divisor $sP_i$ for $1 \leq i \leq z$ and $2 \leq s \leq k_i$:
\begin{itemize}
\item If $P_i \notin \mathcal{E}[2]$ (i.e., $-P_i \neq P_i$), then
\[
f_{i,s}(x,y) = \frac{y + A_{i,s}(x)}{(x - \alpha_i)^s},
\]
where $A_{i,s}(x) \in \mathbb{F}_{q}[x]$ has degree $\leq s-1$ and satisfies $v_{-P_i}(y + A_{i,s}(x)) \geq s$ (constructed as in Algorithm~\ref{alg:R-R_single_char_2_3}).
\item If $P_i \in \mathcal{E}[2]$ (i.e., $-P_i = P_i$), then
\[
f_{i,s}(x,y) = \begin{cases}
\dfrac{1}{(x-\alpha_i)^{s/2}}, & \text{if } s \text{ is even},\\[8pt]
\dfrac{y - \beta_i}{(x-\alpha_i)^{(s+1)/2}}, & \text{if } s \text{ is odd}.
\end{cases}
\]
\end{itemize}

\noindent 2. Double-point basis functions, with pole divisor $P_i + P_j$ for $1 \leq i \neq j \leq z$ (simple poles at $P_i$ and $P_j$):
\[
g_{i,j}(x,y) = \begin{cases}
\dfrac{y + B_{i,j}(x)}{(x - \alpha_i)(x - \alpha_j)}, & \text{if } \alpha_i \neq \alpha_j,\\[8pt]
\dfrac{1}{x - \alpha_i}, & \text{if } \alpha_i = \alpha_j,
\end{cases}
\]
where $B_{i,j}(x)$ is the linear polynomial
\[
B_{i,j}(x) = \left( \frac{\beta_j - \beta_i}{\alpha_j - \alpha_i} + a_1 \right) x + \left( (\beta_i + a_1\alpha_i + a_3) - \left( \frac{\beta_j - \beta_i}{\alpha_j - \alpha_i} + a_1 \right) \alpha_i \right).
\]
\end{theorem}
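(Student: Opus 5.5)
The plan is to verify three things directly: every listed function lies in $\mathscr{L}(G)$; the list has exactly $\deg(G)$ elements once the double-point functions are restricted to a spanning subfamily; and the list is linearly independent. Since $g=1$ and $\deg(G)=\sum_i k_i\geq 1$, Riemann--Roch gives $\ell(G)=\deg(G)$. So a linearly independent family of $\deg(G)$ elements of $\mathscr{L}(G)$ is a basis.

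\textbf{Membership.} For the single-point functions $f_{i,s}$ I would invoke Lemma~\ref{lem:R-R_single} at the point $P_i$. It gives $f_{i,s}\in\mathscr{L}(sP_i)$ with pole divisor exactly $sP_i$, hence $f_{i,s}\in\mathscr{L}(k_iP_i)\subseteq\mathscr{L}(G)$ for $s\le k_i$. For $g_{i,j}$ with $\alpha_i\neq\alpha_j$, I would first check that the line $y=-B_{i,j}(x)$ passes through $-P_i$ and $-P_j$. In general Weierstrass form $-P=(\alpha,-\beta-a_1\alpha-a_3)$. One sees $-B_{i,j}(\alpha_i)=-(\beta_i+a_1\alpha_i+a_3)$, and the slope $-(m+a_1)$, with $m=\frac{\beta_j-\beta_i}{\alpha_j-\alpha_i}$, equals the slope of the chord through $-P_i,-P_j$.

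Hence $\operatorname{div}(y+B_{i,j})=(-P_i)+(-P_j)+R-3P_\infty$ for some third point $R$. Also $\operatorname{div}\bigl((x-\alpha_i)(x-\alpha_j)\bigr)=P_i+(-P_i)+P_j+(-P_j)-4P_\infty$. Subtracting gives
\[
\operatorname{div}(g_{i,j})=R+P_\infty-P_i-P_j .
\]
This holds when $\pm P_i,\pm P_j$ are four distinct points. The degenerate subcases (some $P_i\in\mathcal{E}[2]$, or the chord tangent at $-P_i$) need a separate valuation count, but the cancellation pattern is the same. If $\alpha_i=\alpha_j$, then $P_j=-P_i$ and $\operatorname{div}(1/(x-\alpha_i))=2P_\infty-P_i-P_j$. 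In every case the pole divisor is exactly $P_i+P_j\le G$.

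\textbf{Counting and independence.} I would take the family $\{1\}\cup\{f_{i,s}:2\le s\le k_i\}\cup\{g_{1,j}:2\le j\le z\}$, which has $1+\sum_i(k_i-1)+(z-1)=\deg(G)$ elements. Suppose a linear relation holds. Fix $i$ and look at the largest $s$ with a nonzero coefficient on $f_{i,s}$, say $s\geq2$. Then $f_{i,s}$ is the only term with a pole of order $s$ at $P_i$, since all other functions have pole order $<s$ there. That is a contradiction, so by descending induction all $f$-coefficients vanish. Next, for $j\ge2$ the only remaining function with a pole at $P_j$ is $g_{1,j}$, so its coefficient vanishes. Finally the constant's coefficient vanishes too.

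\textbf{Main obstacle.} I expect the difficulty to be the valuation bookkeeping in the degenerate configurations, including characteristic $2$ where $\mathcal{E}[2]$ behaves differently. There one must confirm that the numerator $y+B_{i,j}$ does not also vanish at $P_i$ or $P_j$ and cancel the intended pole. I would handle this by noting that a vanishing of $y+B_{i,j}$ at $P_i$ would force the chord to meet $\mathcal{E}$ at $P_i$ as well as at $-P_i$. Since the chord is not vertical, it meets the vertical line $x=\alpha_i$ only once, so this can happen only if $P_i=-P_i$, i.e.\ $P_i\in\mathcal{E}[2]$. That case I would treat by an explicit local-parameter computation at $P_i$.
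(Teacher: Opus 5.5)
The paper does not prove this statement. It quotes it from \cite{KM26} and reproduces only Algorithm~\ref{alg:R-R_single_char_2_3} for the $A_{i,s}$. So there is no in-paper argument to compare with, and your direct verification has to stand on its own. It does. Membership of each function in $\mathscr{L}(G)$ is correct. The count $1+\sum_i(k_i-1)+(z-1)=\deg(G)=\ell(G)$ is correct, and so is the pole-order argument for linear independence.

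Your reading of the statement is also the only tenable one. Taking $g_{i,j}$ for all pairs would add $\binom{z}{2}$ double-point functions, which is too many once $z\ge 3$. Only $z-1$ of them can appear, matching the $g_j$, $1\le j\le z-1$, in Algorithm~\ref{alg:McEliece_keygen}. Your star $\{g_{1,j}\}_{j\ge 2}$ is a convenient choice, because each $g_{1,j}$ is then the unique function in the family with a pole at $P_j$.

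The ``main obstacle'' you anticipate does not actually arise, for the following reasons.
\begin{itemize}
\item The identity $\operatorname{div}(g_{i,j})=R+P_\infty-P_i-P_j$ is pure divisor arithmetic. It holds verbatim in every configuration, including $2$-torsion points, tangency and characteristic $2$. This is because $\operatorname{div}(x-\alpha)=P+(-P)-2P_\infty$ always. Likewise $\operatorname{div}(y+B_{i,j})=(-P_i)+(-P_j)+R-3P_\infty$ for any non-vertical line through the two distinct points $-P_i,-P_j$.
\item The only thing to exclude is $R\in\{P_i,P_j\}$. Proposition~\ref{th:abel-jacobi} gives $R=P_i\oplus P_j$, which is neither point.
\item More directly still: $g_{i,j}\in\mathscr{L}(P_i+P_j)$ is nonconstant, and $\ell(P)=1$ for every point of a genus-$1$ curve. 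Hence its pole divisor cannot be a single point.
\end{itemize}
So you need neither the local-parameter computation nor a separate characteristic-$2$ case. Your vertical-line argument is correct too, just longer.

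Similarly, for $f_{i,s}$ with $P_i\notin\mathcal{E}[2]$, the exact pole order $s$ at $P_i$ follows without Lemma~\ref{lem:R-R_single}. Write $-P_i=(\alpha_i,\beta_i')$. Then $v_{-P_i}(y+A_{i,s})\ge 1$ forces $A_{i,s}(\alpha_i)=-\beta_i'$, so $(y+A_{i,s})(P_i)=\beta_i-\beta_i'\neq 0$.
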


Algorithm~\ref{alg:R-R_single_char_2_3} provides one possible method for constructing the functions of the form $f_s(X,Y)=\frac{Y+A_s(X)}{(X-\alpha)^s}$ from Lemma~\ref{lem:R-R_single}.

\begin{algorithm}[ht!]
\fontsize{9pt}{9pt}\selectfont
\caption{Basis for $\mathscr{L}(kP_{\alpha,\beta})$ from \cite{KM26}}
\label{alg:R-R_single_char_2_3}
\begin{algorithmic}[1]
\Require $\mathcal{E}/\mathbb{F}_q:\left\{\begin{array}{cl}
y^2 + a_1xy + a_3y = x^3 + a_2x^2 + a_4x + a_6,      &  \text{if } \operatorname{char}(\mathbb{F}_q) = 2, \\
 y^2 = x^3 + a_4x + a_6,    &  \text{if } \operatorname{char}(\mathbb{F}_q) > 3. 
\end{array}\right.$  

Point $P_{\alpha, \beta} = (\alpha, \beta) \in \mathcal{E}(\mathbb{F}_{q})\setminus \mathcal{E}[2]$;

Integer $k \geq 2$.
\Ensure Basis $\{1, f_2, \dots, f_k\}$ for $\mathscr{L}(kP_{\alpha, \beta})$.
    \State $\beta^\prime \gets \begin{cases} 
        \beta + a_1\alpha + a_3,  & \operatorname{char}(\mathbb{F}_q) = 2,\\
        -\beta, & \operatorname{char}(\mathbb{F}_q) > 3.
    \end{cases}$
\State $\mathscr{L}_b \gets \{1\}$. 
\State Set local parameter $t \gets x - \alpha$.

\For{$s = 2$ \textbf{to} $k$}\Comment{Expand $y = \beta^\prime + \sum_{j=1}^{s-1} c_j t^j + O(t^s)$}
\State $c_1 \gets \left\{
\begin{array}{cl}
        \dfrac{\alpha^2 + a_4 + a_1\beta^\prime}{a_1\alpha + a_3}, & \operatorname{char}(\mathbb{F}_q) = 2,
        \\[2.5ex]
        \dfrac{3\alpha^2 + a_4}{2\beta^\prime}, & \operatorname{char}(\mathbb{F}_q) > 3.
    \end{array}\right.
   $
    \For{$j = 2$ \textbf{to} $s$} \Comment{Substitute $x = t + \alpha$ into the curve equation}
        \State $S_j \gets \sum\limits_{i=1}^{j-1} c_i c_{j-i}$.
        \State $c_j = \left\{
\begin{array}{cl}
            \dfrac{(\alpha + a_2)\delta_{j,2} + \delta_{j,3} + a_1c_{j-1} + S_j}{a_1\alpha + a_3}, & \operatorname{char}(\mathbb{F}_q) = 2, \\[2ex]
        \dfrac{(3\alpha)\delta_{j,2} + \delta_{j,3} - S_j}{2\beta'}, & \operatorname{char}(\mathbb{F}_q) > 3.
\end{array}\right.$
    \EndFor 
    \State $A_s(x) \gets -\beta^\prime - \sum\limits_{j=1}^{s-1} c_j (x - \alpha)^j$.
    \State $f_s \gets \dfrac{y + A_s(x)}{(x - \alpha)^s}$.
    \State $\mathscr{L}_b \gets \mathscr{L}_b \cup \{f_s\}$.
\EndFor
\State \Return $\mathscr{L}_b$.
\end{algorithmic}
\end{algorithm}

\end{appendices}
	
\end{document}